\newtheorem{theorem}{Theorem}
\newtheorem*{theorem-non}{Theorem}
\newtheorem{lemma}{Lemma}
\DeclareMathOperator*{\argmax}{argmax}
\title{Combining Deep Reinforcement Learning and Search for Imperfect-Information Games}
\author{
Noam Brown\thanks{Equal contribution} \enskip Anton Bakhtin\footnotemark[1] \enskip Adam Lerer  \enskip Qucheng Gong \\
 Facebook AI Research  \\
  \texttt{\{noambrown,yolo,alerer,qucheng\}@fb.com}  \\
}
\begin{document}

\maketitle

\vspace{-0.05in}
\begin{abstract}
\vspace{-0.1in}
The combination of deep reinforcement learning and search at both training and test time is a powerful paradigm that has led to a number of successes in single-agent settings and perfect-information games, best exemplified by AlphaZero. However, prior algorithms of this form cannot cope with imperfect-information games. This paper presents ReBeL, a general framework for self-play reinforcement learning and search that provably converges to a Nash equilibrium in any two-player zero-sum game. In the simpler setting of perfect-information games, ReBeL reduces to an algorithm similar to AlphaZero. Results in two different imperfect-information games show ReBeL converges to an approximate Nash equilibrium. We also show ReBeL achieves superhuman performance in heads-up no-limit Texas hold'em poker, while using far less domain knowledge than any prior poker AI.
\end{abstract}
\vspace{-0.05in}
\section{Introduction}
\label{sec:intro}
\vspace{-0.1in}
Combining reinforcement learning with search at both training and test time \textbf{(RL+Search)} has led to a number of major successes in AI in recent years. For example, the AlphaZero algorithm achieves state-of-the-art performance in the perfect-information games of Go, chess, and shogi~\cite{silver2018general}.

However, prior RL+Search algorithms do not work in imperfect-information games because they make a number of assumptions that no longer hold in these settings. An example of this is illustrated in Figure~\ref{fig:rps1}, which shows a modified form of Rock-Paper-Scissors in which the winner receives two points (and the loser loses two points) when either player chooses Scissors~\cite{brown2018depth}. The figure shows the game in a sequential form in which player~2 acts after player~1 but does not observe player~1's action.

\begin{figure}[!h]
	\vspace{-0.15in}
	\centering
	\begin{subfigure}[t!]{.48\textwidth}
		\centering
		\includegraphics[width=65mm]{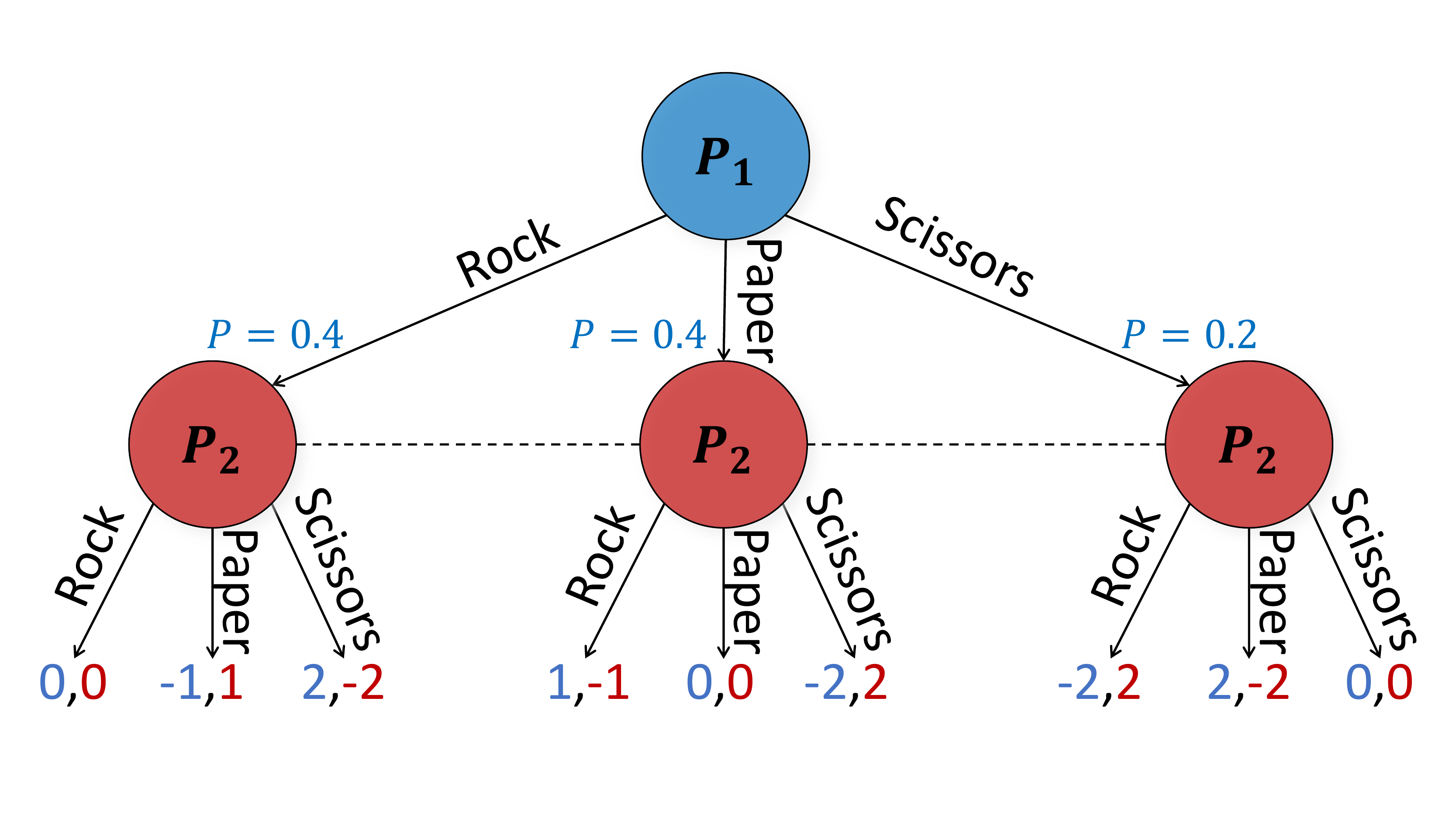}
		\vspace{-0.2in}
		\caption{Variant of Rock-Paper-Scissors in which the optimal player~1 policy is (R=0.4, P=0.4, S=0.2). Terminal values are color-coded. The dotted lines mean player~2 does not know which node they are in.}
		\label{fig:rps1}
	\end{subfigure}%
	\quad
	\begin{subfigure}[t!]{.48\textwidth}
		\centering
		\includegraphics[width=65mm]{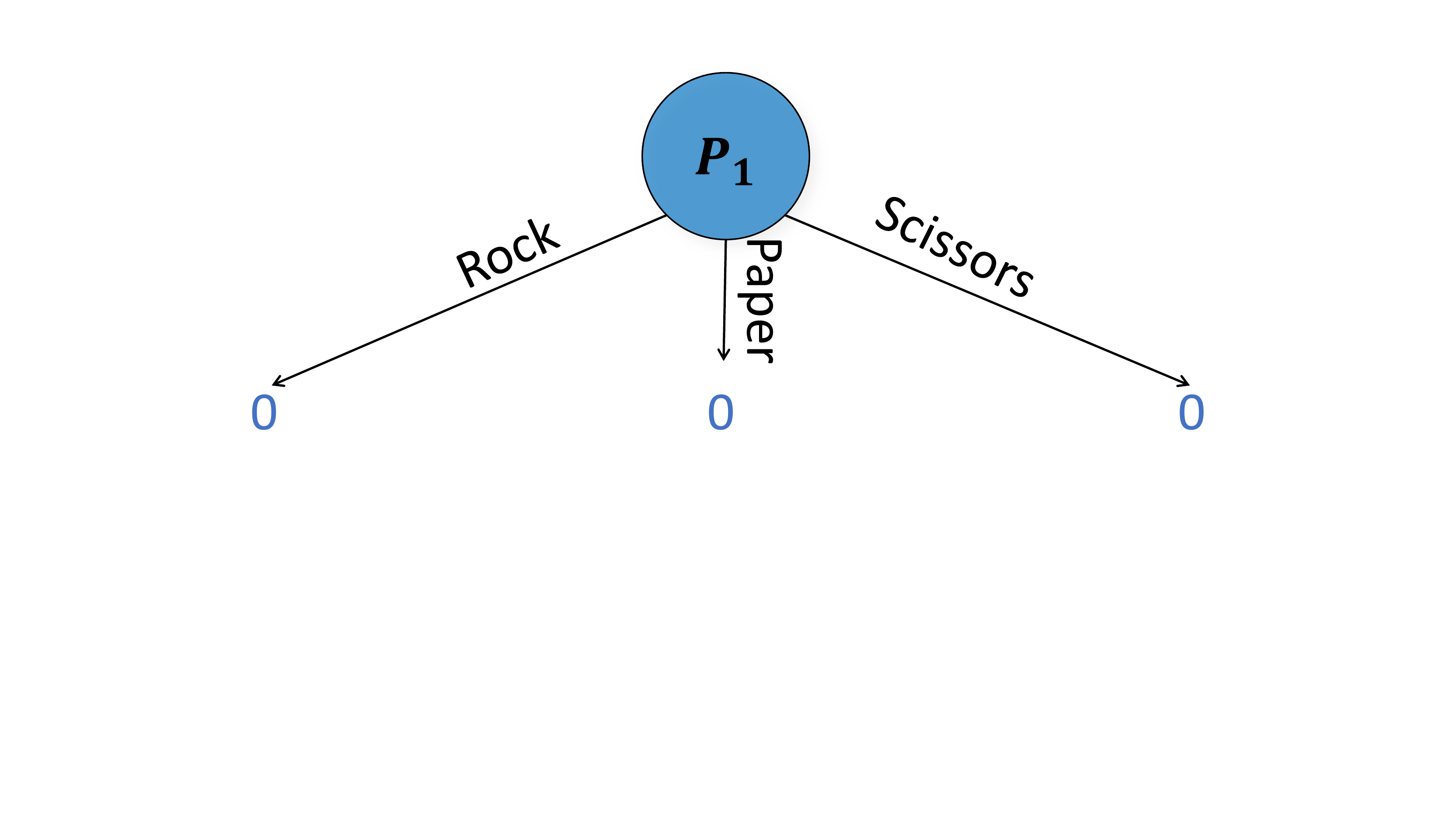}
		\vspace{-0.2in}
		\caption{The player~1 subgame when using perfect-information one-ply search. Leaf values are determined by the full-game equilibrium. There is insufficient information for finding (R=0.4, P=0.4, S=0.2).}
		\label{fig:rps2}
	\end{subfigure}
\end{figure}

The optimal policy for both players in this modified version of the game is to choose Rock and Paper with 40\% probability, and Scissors with 20\%. In that case, each action results in an expected value of zero. However, as shown in Figure~\ref{fig:rps2}, if player~1 were to conduct one-ply lookahead search as is done in perfect-information games (in which the equilibrium value of a state is substituted at a leaf node), then there would not be enough information for player~1 to arrive at this optimal policy.

This illustrates a critical challenge of imperfect-information games: unlike perfect-information games and single-agent settings, the value of an action may depend on the probability it is chosen.
Thus, a state defined only by the sequence of actions and observations does not have a unique value and therefore
existing RL+Search algorithms such as AlphaZero are not sound in imperfect-information games.
Recent AI breakthroughs in imperfect-information games have highlighted the importance of search at test time~\cite{moravvcik2017deepstack,brown2017superhuman,brown2019superhuman,lerer2020improving}, but combining RL and search during training in imperfect-information games has been an open problem.

This paper introduces ReBeL (Recursive Belief-based Learning), a general RL+Search framework that converges to a Nash equilibrium in two-player zero-sum games. ReBeL builds on prior work in which the notion of ``state'' is expanded to include the probabilistic belief distribution of all agents about what state they may be in, based on common knowledge observations and policies for all agents. Our algorithm trains a value network and a policy network for these expanded states through self-play reinforcement learning. Additionally, the algorithm uses the value and policy network for search during self play.

ReBeL provably converges to a Nash equilibrium in all two-player zero-sum games. In perfect-information games, ReBeL simplifies to an algorithm similar to AlphaZero, with the major difference being in the type of search algorithm used.
Experimental results show that ReBeL is effective in large-scale games and defeats a top human professional with statistical significance in the benchmark game of heads-up no-limit Texas hold'em poker while using far less expert domain knowledge than any previous poker AI. We also show that ReBeL approximates a Nash equilibrium in Liar's Dice, another benchmark imperfect-information game, and open source our implementation of it.\footnote{\url{https://github.com/facebookresearch/rebel}}

\vspace{-0.05in}
\section{Related Work}
\vspace{-0.1in}

At a high level, ReBeL resembles past RL+Search algorithms used in perfect-information games~\cite{tesauro1994td,silver2017mastering,anthony2017thinking,silver2018general,schrittwieser2019mastering}. These algorithms train a value network through self play. During training, a search algorithm is used in which the values of leaf nodes are determined via the value function. Additionally, a policy network may be used to guide search. These forms of RL+Search have been critical to achieving superhuman performance in benchmark perfect-information games. For example, so far no AI agent has achieved superhuman performance in Go without using search at both training and test time. However, these RL+Search algorithms are not theoretically sound in imperfect-information games and have not been shown to be successful in such settings.

A critical element of our imperfect-information RL+Search framework is to use an expanded notion of ``state'', which we refer to as a \textbf{public belief state (PBS)}. PBSs are defined by a common-knowledge belief distribution over states, determined by the public observations shared by all agents and the policies of all agents. PBSs can be viewed as a multi-agent generalization of belief states used in partially observable Markov decision processes (POMDPs)~\cite{kaelbling1998planning}. The concept of PBSs originated in work on decentralized multi-agent POMDPs~\cite{nayyar2013decentralized,oliehoek2013sufficient,dibangoye2016optimally} and has been widely used since then in imperfect-information games more broadly~\cite{moravvcik2017deepstack,foerster2019bayesian,serrino2019finding,horak2019solving}.

ReBeL builds upon the idea of using a PBS value function during search, which was previously used in the poker AI DeepStack~\cite{moravvcik2017deepstack}. However, DeepStack's value function was trained not through self-play RL, but rather by generating random PBSs, including random probability distributions, and estimating their values using search. This would be like learning a value function for Go by randomly placing stones on the board.
This is not an efficient way of learning a value function because the vast majority of randomly generated situations would not be relevant in actual play. DeepStack coped with this by using handcrafted features to reduce the dimensionality of the public belief state space, by sampling PBSs from a distribution based on expert domain knowledge, and by using domain-specific abstractions to circumvent the need for a value network when close to the end of the game.

An alternative approach for depth-limited search in imperfect-information games that does not use a value function for PBSs was used in the Pluribus poker AI to defeat elite humans in multiplayer poker~\cite{brown2018depth,brown2019superhuman}. This approach trains a population of ``blueprint'' policies without using search. At test time, the approach conducts depth-limited search by allowing each agent to choose a blueprint policy from the population at leaf nodes. The value of the leaf node is the expected value of each agent playing their chosen blueprint policy against all the other agents' choice for the rest of the game.
While this approach has been successful in poker, it does not use search during training and therefore requires strong blueprint policies to be computed without search. Also, the computational cost of the search algorithm grows linearly with the number of blueprint policies.

\vspace{-0.05in}
\section{Notation and Background}
\label{sec:background}
\vspace{-0.1in}
We assume that the rules of the game and the agents' policies (including search algorithms) are \textbf{common knowledge}~\cite{aumann1976agreeing}.\footnote{This is a common assumption in game theory. One argument for it is that in repeated play an adversary would eventually determine an agent's policy.} That is, they are known by all agents, all agents know they are known by all agents, etc. However, the outcome of stochastic algorithms (i.e., the random seeds) are not known.
In Section~\ref{sec:test} we show how to remove the assumption that we know another player's policy.

Our notation is based on that of factored observation games~\cite{kovavrik2019rethinking} which is a modification of partially observable stochastic games~\cite{hansen2004dynamic} that distinguishes between private and public observations.
We consider a game with $\mathcal{N} = \{1,2,...,N\}$ agents.

A \textbf{world state} $w \in \mathcal{W}$ is a state in the game. $\mathcal{A} = \mathcal{A}_1 \times \mathcal{A}_2 \times ... \times \mathcal{A}_N$ is the space of joint actions. $\mathcal{A}_i(w)$ denotes the legal actions for agent~$i$ at $w$ and $a = (a_1, a_2, ..., a_N) \in \mathcal{A}$ denotes a joint action. After a joint action~$a$ is chosen, a transition function $\mathcal{T}$ determines the next world state $w'$ drawn from the probability distribution $\mathcal{T}(w,a) \in \Delta \mathcal{W}$. After joint action~$a$, agent~$i$ receives a reward $\mathcal{R}_i(w,a)$.

Upon transition from world state $w$ to $w'$ via joint action $a$, agent~$i$ receives a \textbf{private observation} from a function $\mathcal{O}_{\text{priv}(i)}(w,a,w')$. Additionally, all agents receive a \textbf{public observation} from a function $\mathcal{O}_{\text{pub}}(w,a,w')$. Public observations may include observations of publicly taken actions by agents. For example, in many recreational games, including poker, all betting actions are public.

A \textbf{history} (also called a trajectory) is a finite sequence of legal actions and world states, denoted $h = (w^0, a^0, w^1, a^1, ..., w^t)$. An \textbf{infostate} (also called an \textbf{action-observation history (AOH)}) for agent~$i$ is a sequence of an agent's observations and actions $s_i = (O_i^0, a_i^0, O_i^1, a_i^1, ..., O_i^t)$ where $O_i^k = \big(\mathcal{O}_{\text{priv}(i)}(w^{k-1},a^{k-1},w^k), \mathcal{O}_{\text{pub}}(w^{k-1},a^{k-1},w^k)\big)$.
The unique infostate corresponding to a history $h$ for agent~$i$ is denoted $s_i(h)$. The set of histories that correspond to $s_i$ is denoted $\mathcal{H}(s_i)$.

A \textbf{public state} is a sequence $s_{\text{pub}} = (O_{\text{pub}}^0, O_{\text{pub}}^1, ..., O_{\text{pub}}^t)$ of public observations. The unique public state corresponding to a history $h$ and an infostate $s_i$ is denoted $s_{\text{pub}}(h)$ and $s_{\text{pub}}(s_i)$, respectively. The set of histories that match the sequence of public observation of $s_\text{pub}$ is denoted $\mathcal{H}(s_\text{pub})$.

For example, consider a game where two players roll two six-sided dice each. One die of each player is publicly visible; the other die is only observed by the player who rolled it. Suppose player~$1$ rolls a $3$ and a $4$ (with $3$ being the hidden die), and player~$2$ rolls a $5$ and a $6$ (with $5$ being the hidden die). The history (and world state) is $h = \big((3,4),(5,6)\big)$. The set of histories corresponding to player~$2$'s infostate is $\mathcal{H}(s_2) = \big\{\big((x,4),(5,6)\big) \mid x \in \{1,2,3,4,5,6\}\big\}$, so $|\mathcal{H}(s_2)| = 6$. The set of histories corresponding to $s_{\text{pub}}$ is $\mathcal{H}(s_{\text{pub}}) = \big\{\big((x,4),(y,6)\big) \mid x,y \in \{1,2,3,4,5,6\}\big\}$, so $|\mathcal{H}(s_{\text{pub}})| = 36$ .

Public states provide an easy way to reason about common knowledge in a game. All agents observe the same public sequence $s_\text{pub}$, and therefore it is common knowledge among all agents that the true history is some $h \in \mathcal{H}(s_\text{pub})$.\footnote{As explained in~\cite{kovavrik2019rethinking}, it may be possible for agents to infer common knowledge beyond just public observations.
However, doing this additional reasoning is inefficient both theoretically and practically.}

An agent's \textbf{policy} $\pi_i$ is a function mapping from an infostate to a probability distribution over actions. A \textbf{policy profile} $\pi$ is a tuple of policies $(\pi_1, \pi_2, ..., \pi_N)$.
The expected sum of future rewards (also called the \textbf{expected value (EV)}) for agent~$i$ in history $h$ when all agents play policy profile $\pi$ is denoted $v_i^{\pi}(h)$. The EV for the entire game is denoted $v_i(\pi)$. A \textbf{Nash equilibrium} is a policy profile such that no agent can achieve a higher EV by switching to a different policy~\cite{nash1951non}. Formally, $\pi^*$ is a Nash equilibrium if for every agent~$i$, $v_i(\pi^*) = \max_{\pi_i}v_i(\pi_i, \pi^*_{-i})$ where $\pi_{-i}$ denotes the policy of all agents other than $i$. A \textbf{Nash equilibrium policy} is a policy $\pi^*_i$ that is part of some Nash equilibrium $\pi^*$.

A \textbf{subgame} is defined by a root history $h$ in a perfect-information game and all histories that can be reached going forward. In other words, it is identical to the original game except it starts at $h$. A \textbf{depth-limited subgame} is a subgame that extends only for a limited number of actions into the future. Histories at the bottom of a depth-limited subgame (i.e., histories that have no legal actions in the depth-limited subgame) but that have at least one legal action in the full game are called \textbf{leaf nodes}. In this paper, we assume for simplicity that search is performed over fixed-size depth-limited subgame (as opposed to Monte Carlo Tree Search, which grows the subgame over time~\cite{gelly2007combining}).

A game is \textbf{two-player zero-sum (2p0s)} if there are exactly two players and $\mathcal{R}_1(w,a) = -\mathcal{R}_2(w,a)$ for every world state~$w$ and action~$a$.
In 2p0s perfect-information games, there always exists a Nash equilibrium that depends only on the current world state~$w$ rather than the entire history~$h$. Thus, in 2p0s perfect-information games a policy can be defined for world states and a subgame can be defined as rooted at a world state. Additionally, in 2p0s perfect-information games every world state $w$ has a unique value $v_i(w)$ for each agent~$i$, where $v_1(w) = -v_2(w)$, defined by both agents playing a Nash equilibrium in any subgame rooted at that world state. Our theoretical and empirical results are limited to 2p0s games, though related techniques have been empirically successful in some settings with more players~\cite{brown2019superhuman}.
A typical goal for RL in 2p0s perfect-information games is to learn $v_i$.
With that value function, an agent can compute its optimal next move by solving a depth-limited subgame that is rooted at its current world state and where the value of every leaf node $z$ is set to $v_i(z)$~\cite{shannon1950programming,samuel1959some}.

\vspace{-0.05in}
\section{From World States to Public Belief States}
\vspace{-0.1in}
\label{sec:pbs}

In this section we describe a mechanism for converting any imperfect-information game into a continuous state (and action) space perfect-information game where the state description contains the probabilistic belief distribution of all agents. In this way, techniques that have been applied to perfect-information games can also be applied to imperfect-information games (with some modifications).

For intuition, consider a game in which one of 52 cards is privately dealt to each player. On each turn, a player chooses between three actions: fold, call, and raise. Eventually the game ends and players receive a reward. Now consider a modification of this game in which the players cannot see their private cards; instead, their cards are seen by a ``referee''. On a player's turn, they announce the probability they would take each action with each possible private card. The referee then samples an action on the player's behalf from the announced probability distribution for the player's true private card. When this game starts, each player's belief distribution about their private card is uniform random. However, after each action by the referee, players can update their belief distribution about which card they are holding via Bayes' Rule. Likewise, players can update their belief distribution about the \emph{opponent's} private card through the same operation. Thus, the probability that each player is holding each private card is common knowledge among all players at all times in this game.

A critical insight is that \textit{these two games are strategically identical}, but the latter contains no private information and is instead a continuous state (and action) space perfect-information game. While players do not announce their action probabilities for each possible card in the first game, we assume (as stated in Section~\ref{sec:background}) that all players' policies are common knowledge, and therefore the probability that a player would choose each action for each possible card is indeed known by all players.
Of course, at test time (e.g., when our agent actually plays against a human opponent) the opponent does not actually announce their entire policy and therefore our agent does not know the true probability distribution over opponent cards. We later address this problem in Section~\ref{sec:test}.

We refer to the first game as the \textbf{discrete representation} and the second game as the \textbf{belief representation}. In the example above, a history in the belief representation, which we refer to as a \textbf{public belief state (PBS)}, is described by the sequence of public observations and 104 probabilities (the probability that each player holds each of the 52 possible private card); an ``action'' is described by 156 probabilities (one per discrete action per private card). In general terms, a PBS is described by a joint probability distribution over the agents' possible infostates~\cite{nayyar2013decentralized,oliehoek2013sufficient,dibangoye2016optimally}.\footnote{One could alternatively define a PBS as a probability distribution over histories in $\mathcal{H}(s_\text{pub})$ for public state $s_\text{pub}$. However, it is proven that any PBS that can arise in play can always be described by a joint probability distribution over the agents' possible infostates~\cite{oliehoek2013sufficient,seitz2019value}, so we use this latter definition for simplicity.} Formally, let $S_i(s_\text{pub})$ be the set of infostates that player $i$ may be in given a public state $s_\text{pub}$ and let $\Delta S_1(s_\text{pub})$ denote a probability distribution over the elements of $S_1(s_\text{pub})$. Then PBS $\beta = (\Delta S_1(s_\text{pub}), ..., \Delta S_N(s_\text{pub}))$.\footnote{Frequently, a PBS can be compactly summarized by discarding parts of the history that are no longer relevant. For example, in poker we do not need to track the entire history of actions, but just the amount of money each player has in the pot, the public board cards, and whether there were any bets in the current round.} In perfect-information games, the discrete representation and belief representation are identical.

Since a PBS is a history of the perfect-information belief-representation game, a subgame can be rooted at a PBS.\footnote{Past work defines a subgame to be rooted at a public state~\cite{burch2014solving,brown2015simultaneous,moravcik2016refining,moravvcik2017deepstack,brown2017safe,kovavrik2019problems,sustr2019monte,seitz2019value}. However, imperfect-information subgames rooted at a public state do not have well-defined values.}
The discrete-representation interpretation of such a subgame is that at the start of the subgame a history is sampled from the joint probability distribution of the PBS, and then the game proceeds as it would in the original game. The value for agent~$i$ of PBS~$\beta$ when all players play policy profile $\pi$ is $V_i^{\pi}(\beta) = \sum_{h \in \mathcal{H}(s_{\text{pub}}(\beta))} \left(p(h|\beta) v_i^{\pi}(h)\right)$.
Just as world states have unique values in 2p0s perfect-information games, in 2p0s games (both perfect-information and imperfect-information) every PBS $\beta$ has a unique value $V_i(\beta)$ for each agent~$i$, where $V_1(\beta) = -V_2(\beta)$, defined by both players playing a Nash equilibrium in the subgame rooted at the PBS.

Since any imperfect-information game can be viewed as a perfect-information game consisting of PBSs (i.e., the belief representation), in theory we could approximate a solution of any 2p0s imperfect-information game by running a perfect-information RL+Search algorithm on a discretization of the belief representation. However, as shown in the example above, belief representations can be very high-dimensional continuous spaces, so conducting search (i.e., approximating the optimal policy in a depth-limited subgame) as is done in perfect-information games would be intractable. Fortunately, in 2p0s games, \textit{these high-dimensional belief representations are convex optimization problems.} ReBeL leverages this fact by conducting search via an iterative gradient-ascent-like algorithm.

ReBeL's search algorithm operates on supergradients (the equivalent of subgradients but for concave functions) of the PBS value function at leaf nodes, rather than on PBS values directly. Specifically, the search algorithms require the values of \emph{infostates} for PBSs~\cite{burch2014solving,moravvcik2017deepstack}.
In a 2p0sum game, the value of infostate $s_i$ in $\beta$ assuming all other players play Nash equilibrium $\pi^*$ is the maximum value that player~$i$ could obtain for $s_i$ through any policy in the subgame rooted at $\beta$. Formally,
\begin{equation}
v_i^{\pi^*}(s_i|\beta) = \max_{\pi_i} \sum_{h \in \mathcal{H}(s_i)} p(h | s_i, \beta_{-i}) v_i^{\langle \pi_i, \pi^*_{-i} \rangle}(h)
\label{eq:infostate_value}
\end{equation}
where $p(h|s_i, \beta_{-i})$ is the probability of being in history $h$ assuming $s_i$ is reached and the joint probability distribution over infostates for players other than~$i$ is $\beta_{-i}$.
Theorem~\ref{th:cfv} proves that infostate values can be interpreted as a supergradient of the PBS value function in 2p0s games.
\begin{theorem}
\label{th:cfv}
For any PBS $\beta=(\beta_1,\beta_2)$ (for the beliefs over player 1 and 2 infostates respectively) and any policy $\pi^*$ that is a Nash equilibrium of the subgame rooted at $\beta$,
\begin{equation}
    v_1^{\pi^*}(s_1|\beta) = V_1(\beta) +\bar{g} \cdot \hat{s}_1
\end{equation} 
where $\bar{g}$ is a supergradient of an extension of $V_1(\beta)$ to unnormalized belief distributions and $\hat{s}_1$ is the unit vector in direction $s_1$.
\end{theorem}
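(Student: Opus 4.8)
The plan is to exhibit $V_1(\cdot)$, viewed as a function of the player-1 belief $\beta_1$, as a pointwise minimum of linear functions — hence concave — and then to read off a supergradient at $\beta_1$ whose entries are exactly the infostate values in \eqref{eq:infostate_value}.

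First I would record the decomposition $V_1^{\pi}(\beta)=\sum_{s_1\in S_1(s_{\text{pub}}(\beta))}\beta_1(s_1)\,v_1^{\pi}(s_1\mid\beta)$, obtained by grouping the histories $h\in\mathcal H(s_{\text{pub}}(\beta))$ according to the player-1 infostate $s_1(h)$ they induce and factoring $p(h\mid\beta)=\beta_1(s_1(h))\,p(h\mid s_1(h),\beta_{-1})$; note $v_1^{\pi}(s_1\mid\beta)$ depends on $\beta_{-1}$ but not on $\beta_1$. The structural heart of the argument is that for a \emph{fixed} opponent policy $\pi_2$ player 1's best response separates over the root infostates: the player-1 infostates reachable from $\beta$ form a forest with one tree per $s_1\in S_1(s_{\text{pub}}(\beta))$, a behavioral policy for player 1 is just an independent choice of policy on each such tree, and $v_1^{\langle\pi_1,\pi_2\rangle}(s_1\mid\beta)$ depends only on the restriction of $\pi_1$ to the tree at $s_1$ (player 1 occupies exactly one tree). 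Hence $\max_{\pi_1}V_1^{\langle\pi_1,\pi_2\rangle}(\beta)=\sum_{s_1}\beta_1(s_1)\,\max_{\pi_1}v_1^{\langle\pi_1,\pi_2\rangle}(s_1\mid\beta)$, which — extending to arbitrary unnormalized nonnegative $\beta_1$ by the same formula, the coefficients being independent of $\beta_1$ — is \emph{linear} in $\beta_1$.

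Next, using that the subgame rooted at $\beta$ is (in its discrete representation) a finite two-player zero-sum game, von Neumann's minimax theorem gives $V_1(\beta)=\min_{\pi_2}\max_{\pi_1}V_1^{\langle\pi_1,\pi_2\rangle}(\beta)$; by the previous paragraph this exhibits $\widehat V_1$, the positively homogeneous extension of $\beta_1\mapsto V_1(\beta)$ to unnormalized beliefs, as a minimum of linear functions, hence concave. Now let $\pi^*=(\pi_1^*,\pi_2^*)$ be a Nash equilibrium of the subgame. Because player 1 cannot profitably deviate against $\pi_2^*$, we have $\max_{\pi_1}V_1^{\langle\pi_1,\pi_2^*\rangle}(\beta)=V_1(\beta)$, so $\pi_2^*$ attains the outer minimum at $\beta_1$; consequently the linear function $y\mapsto\sum_{s_1}y(s_1)\,v_1^{\pi^*}(s_1\mid\beta)$ — whose coefficients are the infostate values of \eqref{eq:infostate_value}, since $v_1^{\pi^*}(s_1\mid\beta)=\max_{\pi_1}v_1^{\langle\pi_1,\pi_2^*\rangle}(s_1\mid\beta)$ — dominates $\widehat V_1$ on all nonnegative $y$ and equals it at $y=\beta_1$. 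Therefore $g:=\big(v_1^{\pi^*}(s_1\mid\beta)\big)_{s_1}$ is a supergradient of $\widehat V_1$ at $\beta_1$. To match the normalization in the statement, put $\bar g:=g-V_1(\beta)\mathbf 1$: subtracting the affine map $V_1(\beta)\big(\sum_i y_i-1\big)$ from $\widehat V_1$ yields another concave extension of $V_1$ to unnormalized beliefs for which $\bar g$ is a supergradient at $\beta_1$, and then $V_1(\beta)+\bar g\cdot\hat s_1=V_1(\beta)+\big(v_1^{\pi^*}(s_1\mid\beta)-V_1(\beta)\big)=v_1^{\pi^*}(s_1\mid\beta)$, which is the claim.

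The step I expect to be the main obstacle is the separability in the second paragraph: one must carefully build the player-1 infostate forest for the subgame at $\beta$, show that a behavioral strategy for player 1 decomposes into independent strategies on its trees, and verify that $v_1^{\langle\pi_1,\pi_2\rangle}(s_1\mid\beta)$ is insensitive to $\pi_1$ outside the tree at $s_1$ — this is precisely the place where imperfect information might have coupled the per-infostate best-response problems, and it is what upgrades ``concave'' to ``piecewise linear with the infostate values as a supergradient.'' The remaining ingredients — the minimax identity and the affine bookkeeping needed to produce the additive $V_1(\beta)$ term — are routine by comparison.
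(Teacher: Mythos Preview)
Your argument is correct and shares the paper's core idea: write $V_1$ as a pointwise minimum over $\pi_2$ of functions linear in $\beta_1$, then read off a supergradient from the supporting hyperplane supplied by the Nash policy $\pi_2^*$. The separability step you flag as the main obstacle is exactly what the paper invokes (``for a fixed opponent policy $\pi_2^*$, each player~1 infostate $s_1$ is an independent MDP'').

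The one genuine technical difference is the choice of extension to unnormalized beliefs. The paper uses the normalization-invariant extension $\tilde V_1(b_1)=V_1(b_1/|b_1|_1)$ and obtains $\bar g\cdot\hat s_1=v_1^{\pi^*}(s_1\mid\beta)-V_1(\beta)$ via the chain rule through the normalization map. You instead take the positively homogeneous extension $\widehat V_1(y)=\min_{\pi_2}\sum_{s_1}y(s_1)\,v_1^{(\cdot,\pi_2)}(s_1\mid\beta)$, for which the raw infostate-value vector $g$ is directly a supergradient, and then subtract the affine correction $V_1(\beta)(\sum_i y_i-1)$ to produce $\bar g=g-V_1(\beta)\mathbf 1$. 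Your route avoids the chain-rule computation at the cost of an extension that depends on the base point $\beta$; the paper's extension is a single global function but requires the derivative bookkeeping. Both satisfy the theorem as stated, since it only demands \emph{an} extension.
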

All proofs are presented in the appendix.

Since ReBeL's search algorithm uses infostate values, so rather than learn a PBS value function ReBeL instead learns an infostate-value function $\hat{v}: \mathcal{B} \rightarrow \mathbb{R}^{|S_1|+|S_2|}$ that directly approximates for each $s_i$ the average of the sampled $v_i^{\pi^*}(s_i|\beta)$ values produced by ReBeL at $\beta$.\footnote{Unlike the PBS value $V_i(\beta)$, the infostate values may not be unique and may depend on which Nash equilibrium is played in the subgame. Nevertheless,
any linear combination of supergradients is itself a supergradient since the set of all supergradients is a convex set~\cite{rockafellar1970convex}.
}

\vspace{-0.05in}
\section{Self Play Reinforcement Learning and Search for Public Belief States}
\vspace{-0.1in}
\label{sec:method}
In this section we describe ReBeL and prove that it approximates a Nash equilibrium in 2p0s games.
At the start of the game, a depth-limited subgame rooted at the initial PBS $\beta_r$ is generated. This subgame is solved (i.e., a Nash equilibrium is approximated) by running $T$ iterations of an iterative equilibrium-finding algorithm in the discrete representation of the game, but using the learned value network~$\hat{v}$ to approximate leaf values on every iteration.
During training, the infostate values at $\beta_r$ computed during search are added as training examples for $\hat{v}$ and (optionally) the subgame policies are added as training examples for the policy network. Next, a leaf node~$z$ is sampled and the process repeats
with the PBS at $z$ being the new subgame root. Appendix~\ref{sec:code} shows detailed pseudocode.

\begin{algorithm}
\caption{ReBeL: RL and Search for Imperfect-Information Games}

\begin{algorithmic}
\Function{SelfPlay}{$\beta_r, \theta^v, \theta^\pi, D^v,D^\pi$} \Comment{$\beta_r$ is the current PBS}
    \While{!\Call{IsTerminal}{$\beta_r$}}
    \State $G \gets $ \Call{ConstructSubgame}{$\beta_r$}
    \State $\bar{\pi}, \pi^{t_{\text{warm}}} \gets $ \Call{InitializePolicy}{$G, \theta^\pi$} \Comment{$t_{\text{warm}} = 0$ and $\pi^0$ is uniform if no warm start}
    \State $G \gets $ \Call{SetLeafValues}{$G, \bar{\pi}, \pi^{t_{\text{warm}}}, \theta^v$}
    \State $v(\beta_r) \gets $ \Call{ComputeEV}{$G,\pi^{t_\text{warm}}$}
    \State $t_{\textit{sample}} \sim \mathrm{unif}\{t_{\text{warm}}+1, T\}$ \Comment{Sample an iteration}
    \For{$t=(t_{\text{warm}}+1)..T$}
        \If{$t = t_{\textit{sample}}$}
            \State $\beta'_r \gets $ \Call{SampleLeaf}{$G,\pi^{t-1}$} \Comment{Sample one or multiple leaf PBSs}
        \EndIf
        \State $\pi^{t} \gets $ \Call{UpdatePolicy}{$G, \pi^{t-1}$}
        \State $\bar{\pi} \gets \frac{t}{t+1} \bar{\pi} + \frac{1}{t+1} \pi^{t}$
        \State $G \gets $ \Call{SetLeafValues}{$G, \bar{\pi}, \pi^{t}, \theta^v$}
        \State $v(\beta_r) \gets \frac{t}{t+1} v(\beta_r) + \frac{1}{t+1}$ \Call{ComputeEV}{$G,\pi^{t}$}
    \EndFor
    \State Add $\{\beta_r,v(\beta_r)\}$ to $D^v$ \Comment{Add to value net training data}
    \For{$\beta \in G$} \Comment{Loop over the PBS at every public state in $G$}
    \State Add $\{\beta,\bar{\pi}(\beta)\}$ to $D^\pi$ \Comment{Add to policy net training data (optional)}
    \EndFor
    $\beta_r \gets \beta'_r$
    \EndWhile
\EndFunction
\end{algorithmic}
\label{alg:rl_loop}
\end{algorithm}
\vspace{-0.1in}
\subsection{Search in a depth-limited imperfect-information subgame}
\vspace{-0.1in}
\label{sec:method_low}
In this section we describe the search algorithm ReBeL uses to solve depth-limited subgames.
We assume for simplicity that the depth of the subgame is pre-determined and fixed.
The subgame is solved in the discrete representation and the solution is then converted to the belief representation. There exist a number of iterative algorithms for solving imperfect-information games~\cite{brown1951iterative,zinkevich2008regret,hoda2010smoothing,kroer2018faster,kroer2018solving}.
We describe ReBeL assuming the \textbf{counterfactual regret minimization - decomposition (CFR-D)} algorithm is used~\cite{zinkevich2008regret,burch2014solving,moravvcik2017deepstack}. CFR is the most popular equilibrium-finding algorithm for imperfect-information games, and CFR-D is an algorithm that solves depth-limited subgames via CFR.
However, ReBeL is flexible with respect to the choice of search algorithm and in Section~\ref{sec:results} we also show experimental results for \textbf{fictitious play (FP)}~\cite{brown1951iterative}.

On each iteration $t$, CFR-D determines a policy profile $\pi^t$ in the subgame. Next, the value of every discrete representation leaf node~$z$ is set to $\hat{v}(s_i(z)| \beta_z^{\pi^t})$, where $\beta_z^{\pi^t}$ denotes the PBS at $z$ when agents play according to $\pi^t$.
This means that the value of a leaf node during search is conditional on $\pi^t$. Thus, the leaf node values change every iteration. Given $\pi^t$ and the leaf node values, each infostate in $\beta_r$ has a well-defined value. This vector of values, denoted $v^{\pi^t}(\beta_r)$, is stored.
Next, CFR-D chooses a new policy profile $\pi^{t+1}$, and the process repeats for $T$ iterations.

When using CFR-D, the \emph{average} policy profile $\bar{\pi}^T$ converges to a Nash equilibrium as $T \to \infty$, rather than the policy on the final iteration.
Therefore, after running CFR-D for $T$ iterations in the subgame rooted at PBS $\beta_{r}$, the value vector $(\sum_{t=1}^T v^{\pi^t}(\beta_r))/T$ is added to the training data for $\hat{v}(\beta_r)$.

Appendix~\ref{sec:cfrave} introduces \textbf{CFR-AVG}, a modification of CFR-D that sets the value of leaf node~$z$ to $\hat{v}(s_i(z)| \beta_z^{\bar{\pi}^t})$ rather than $\hat{v}(s_i(z)| \beta_z^{\pi^t})$, where $\bar{\pi}^t$ denotes the average policy profile up to iteration~$t$. CFR-AVG addresses some weaknesses of CFR-D.
\vspace{-0.1in}
\subsection{Self-play reinforcement learning}
\vspace{-0.1in}
\label{sec:method_high}
We now explain how ReBeL trains a PBS value network through self play. After solving a subgame rooted at PBS~$\beta_{r}$ via search (as described in Section~\ref{sec:method_low}), the value vector for the root infostates is added to the training dataset for $\hat{v}$. Next, a leaf PBS~$\beta'_r$ is sampled and a new subgame rooted at $\beta'_r$ is solved. This process repeats until the game ends.

Since the subgames are solved using an iterative algorithm, we want $\hat{v}$ to be accurate for leaf PBSs on every iteration. Therefore, a leaf node $z$ is sampled according to $\pi^t$ on a \emph{random} iteration $t \sim \mathrm{unif}\{0, T-1\}$, where $T$ is the number of iterations of the search algorithm.\footnote{For FP, we pick a random agent~$i$ and sample according to $(\pi_i^t, \bar{\pi}_{-i}^t)$ to reflect the search operation.} To ensure sufficient exploration, one agent samples random actions with probabilility $\epsilon > 0$.\footnote{The algorithm is still correct if all agents sample random actions with probability $\epsilon$, but that is less efficient because the value of a leaf node that can only be reached if both agents go off policy is irrelevant.} 
In CFR-D $\beta'_r = \beta^{\pi^t}_z$, while in CFR-AVG and FP $\beta'_r = \beta^{\bar{\pi}^t}_z$.

Theorem~\ref{th:perfect} states that, with perfect function approximation, running Algorithm~\ref{alg:rl_loop} will produce a value network whose error is bounded by $\mathcal{O}(\frac{1}{\sqrt{T}})$ for any PBS that could be encountered during play, where $T$ is the number of CFR iterations being run in subgames.

\begin{theorem}
\label{th:perfect}
Consider an idealized value approximator that returns the most recent sample of the value for sampled PBSs, and 0 otherwise.  Running Algorithm~\ref{alg:rl_loop} with $T$ iterations of CFR in each subgame will produce a value approximator that has error of at most $\frac{C}{\sqrt{T}}$ for any PBS that could be encountered during play, where $C$ is a game-dependent constant.
\end{theorem}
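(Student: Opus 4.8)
The plan is to prove the bound by induction on a ``level'' stratification of public belief states, combining the $O(1/\sqrt{T})$ convergence rate of CFR-D inside a single subgame with an error-propagation estimate across subgames. Since the game has finite depth $D$ and each subgame extends a fixed number of actions into the future, the leaf PBSs of any subgame are strictly closer to the end of the game than its root; stratifying PBSs by the number of nested subgames separating them from termination gives at most $L\le D$ levels, with $L$ a game-dependent constant. The first thing I would pin down is the structural fact that makes the learning dynamics analyzable: given the current contents of the idealized value cache, the subgame solve at any root $\beta_r$ is deterministic — CFR-D's iterates $\pi^t$, the averaged policy, and the stored vector $\tfrac1T\sum_t v^{\pi^t}(\beta_r)$ depend only on $\beta_r$ and on the leaf values read from the cache — and the cache entries a level-$\ell$ solve reads all lie at level $\le\ell-1$. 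Hence the cache stabilizes level by level: after every reachable level-$0$ PBS has been sampled its cached value is frozen, then the level-$1$ entries freeze, and so on. The $\epsilon$-exploration by one agent together with sampling a leaf on a uniformly random CFR iteration (needed because CFR-D reads leaf values on \emph{every} iteration, so $\hat v$ must be accurate at $\beta_z^{\pi^t}$ for all $t$) ensures every PBS that can arise during play — equivalently every $\beta_z^{\pi^t}$ for a reachable root, an iteration $t$, and a leaf $z$ reachable under a unilateral deviation — is sampled infinitely often, so in the limit the ``$0$ otherwise'' branch never applies to a relevant PBS.

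The quantitative core is two lemmas about a single subgame rooted at a PBS $\beta$. \emph{(i) Convergence with exact leaves:} if the leaf values equal the true infostate values of the leaf PBSs (equivalently, solving the depth-limited subgame with those leaf values reproduces a Nash equilibrium of the full subgame at $\beta$), then after $T$ iterations of CFR-D the stored vector $\tfrac1T\sum_{t=1}^T v^{\pi^t}(\beta)$ is within $C_0/\sqrt{T}$ in $\ell_\infty$ of the true infostate values at $\beta$, for $C_0$ depending on the size of the subgame and the payoff range; this is the standard CFR-D analysis \cite{burch2014solving,moravvcik2017deepstack}, resting on per-infostate regret being $O(\sqrt{T})$ and on the fact that it is precisely the time-average of per-iterate counterfactual value vectors — not the value of the average strategy — that tracks the equilibrium values. \emph{(ii) Error propagation:} if instead the leaf values are within $\delta$ in $\ell_\infty$ of the true ones at every leaf PBS that occurs, then the stored vector is within $\delta+C_0/\sqrt{T}$ of the true infostate values at $\beta$; this follows from (i) together with the observation that perturbing all leaf values by at most $\delta$ perturbs every infostate value — a proper expectation over reachable leaves along any fixed strategy profile — by at most $\delta$, so the equilibrium and its infostate values shift by $O(\delta)$.

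The induction then runs over levels. A level-$0$ PBS $\beta$ roots a subgame all of whose leaves are terminal with exact payoffs, so (i) gives cached error $\le C_0/\sqrt{T}$. For a level-$\ell$ PBS with $\ell\ge1$: by the induction hypothesis and the level-by-level stabilization above, every leaf value its solve reads (all at level $\le\ell-1$) is within $\ell\,C_0/\sqrt{T}$ of the truth, so (ii) gives cached error $\le \ell\,C_0/\sqrt{T}+C_0/\sqrt{T}=(\ell+1)\,C_0/\sqrt{T}$. Every PBS encountered during play sits at some level $\le L$, so its cached value has error at most $(L+1)C_0/\sqrt{T}$, which is the claim with $C=(L+1)C_0$. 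If one only establishes (ii) with a multiplicative constant $c>1$, i.e. root error $\le c\,\delta+C_0/\sqrt{T}$, the same recursion gives the geometric bound $C=\tfrac{c^{L+1}-1}{c-1}C_0$, still game-dependent; and the same scheme covers CFR-AVG and FP after swapping in the corresponding single-subgame convergence rate and the matching definition of the sampled leaf PBS.

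The part I expect to be the real work is lemmas (i)–(ii), not the bookkeeping. One must be careful that the object ReBeL stores and the induction propagates is the time-average $\tfrac1T\sum_t v^{\pi^t}(\beta)$ of counterfactual value vectors, not the (generally different) value of the average strategy; that under the CFR-D regret bound this time-average converges to a valid infostate-value vector at the claimed rate — infostate values need not be unique, so the target is the vector CFR-D actually converges to, which lies in the convex set of supergradients of $V_1$ at $\beta$ guaranteed by Theorem~\ref{th:cfv}; and that a uniform $\delta$ corruption of the leaf values degrades this convergence only additively (or at worst by a bounded multiplicative factor per level). The finite depth, the determinism of each subgame solve given the cache, and the uniform-iteration leaf sampling are exactly what let these per-subgame guarantees be chained without the $T$-dependence of the error worsening.
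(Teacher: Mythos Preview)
Your proposal is correct and takes essentially the same approach as the paper: both argue by induction on the depth (level) of the PBS from termination, using the determinism of the subgame solve given the cache, the infinite-sampling guarantee from $\epsilon$-exploration plus uniform iteration sampling, and the CFR-D convergence/error-propagation result of~\cite{burch2014solving} to chain the $O(1/\sqrt{T})$ bound across levels. Your treatment is in fact more explicit than the paper's---you spell out the level-by-level cache stabilization and the per-level error recursion (including the hedge to a multiplicative constant $c>1$, which matches the $k_2\delta+k_3/\sqrt{T}$ form the paper invokes from~\cite{burch2014solving} in the proof of Theorem~\ref{th:safe}), whereas the paper's Lemma~\ref{lemma:recursive_ev} compresses all of this into a terse appeal to CFR-D.
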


ReBeL as described so far trains the value network through bootstrapping. One could alternatively train the value network using rewards actually received over the course of the game when the agents do not go off-policy. There is a trade-off between bias and variance between these two approaches~\cite{schulman2015high}.

\vspace{-0.1in}
\subsection{Adding a policy network}
\vspace{-0.1in}
\label{sec:method_policy}
Algorithm~\ref{alg:rl_loop} will result in $\hat{v}$ converging correctly even if a policy network is not used. However, initializing the subgame policy via a policy network may reduce the number of iterations needed to closely approximate a Nash equilibrium. Additionally, it may improve the accuracy of the value network by allowing the value network to focus on predicting PBS values over a more narrow domain.

Algorithm~\ref{alg:rl_loop} can train a policy network $\hat{\Pi} : \mathcal{\beta} \to (\Delta \mathcal{A})^{|S_1|+|S_2|}$ by adding $\bar{\pi}^T(\beta)$ for each PBS~$\beta$ in the subgame to a training dataset each time a subgame is solved (i.e., $T$ iterations of CFR have been run in the subgame).
Appendix~\ref{sec:warm} describes a technique, based on~\cite{brown2016strategy}, for warm starting equilibrium finding given the initial policy from the policy network.

\vspace{-0.05in}
\section{Playing According to an Equilibrium at Test Time}
\vspace{-0.1in}
\label{sec:test}

This section proves that running Algorithm~\ref{alg:rl_loop} at test time with an accurately trained PBS value network will result in playing a Nash equilibrium policy in expectation even if we do not know the opponent's policy.
During self play training we assumed, as stated in Section~\ref{sec:background}, that both players' policies are common knowledge. This allows us to exactly compute the PBS we are in. However, at test time we do not know our opponent's entire policy, and therefore we do not know the PBS. This is a problem for conducting search, because search is always rooted at a PBS.
For example, consider again the game of modified Rock-Paper-Scissors illustrated in Figure~\ref{fig:rps1}. For simplicity, assume that $\hat{v}$ is perfect. Suppose that we are player~2 and player~1 has just acted. In order to now conduct search as player~$2$, our algorithm requires a root PBS. What should this PBS be?

An intuitive choice, referred to as \textbf{unsafe} search~\cite{gilpin2006competitive,ganzfried2015endgame}, is to first run CFR for $T$ iterations for player~1's first move (for some large $T$), which results in a player~1 policy such as $(R = 0.4001, P = 0.3999, S = 0.2)$.
Unsafe search passes down the beliefs resulting from that policy, and then computes our optimal policy as player~$2$. This would result in a policy of $(R = 0, P = 1, S = 0)$ for player~$2$. Clearly, this is not a Nash equilibrium. Moreover, if our opponent knew we would end up playing this policy (which we assume they would know since we assume they know the algorithm we run to generate the policy), then they could exploit us by playing $(R = 0, P = 0, S = 1)$.

This problem demonstrates the need for \textbf{safe} search, which is a search algorithm that ensures we play a Nash equilibrium policy in expectation. Importantly, it is \emph{not} necessary for the policy that the algorithm outputs to always be a Nash equilibrium. It is only necessary that the algorithm outputs a Nash equilibrium policy \emph{in expectation}. For example, in modified Rock-Paper-Scissors it is fine for an algorithm to output a policy of 100\% Rock, so long as the probability it outputs that policy is~40\%.

All past safe search approaches introduce additional constraints to the search algorithm~\cite{burch2014solving,moravcik2016refining,brown2017safe,sustr2019monte}. Those additional constraints
hurt performance in practice compared to unsafe search~\cite{burch2014solving,brown2017safe} and greatly complicate search, so they were never fully used in any competitive agent. Instead, all previous search-based imperfect-information game agents used unsafe search either partially or entirely~\cite{moravvcik2017deepstack,brown2017superhuman,brown2018depth,brown2019superhuman,serrino2019finding}.
Moreover,
using prior safe search techniques at test time
may result in the agent encountering PBSs that were not encountered during self-play training and therefore may result in poor approximations from the value and policy network.

We now prove that safe search can be achieved without any additional constraints by simply \textit{running the same algorithm at test time that we described for training}. This result applies regardless of how the value network was trained and so can be applied to prior algorithms that use PBS value functions~\cite{moravvcik2017deepstack,serrino2019finding}. Specifically, when conducting search at test time we pick a \textit{random} iteration and assume all players' policies match the policies on that iteration.
Theorem~\ref{th:safe}, the proof of which is in Section~\ref{sec:proof_safe}, states that once a value network is trained according to Theorem~\ref{th:perfect}, using Algorithm~\ref{alg:rl_loop} at test time (without off-policy exploration) will approximate a Nash equilibrium.

\begin{theorem}
\label{th:safe}
If Algorithm~\ref{alg:rl_loop} is run at test time with no off-policy exploration, a value network with error at most $\delta$ for any leaf PBS that was trained to convergence as described in Theorem~\ref{th:perfect}, and with $T$ iterations of CFR being used to solve subgames, then the algorithm plays a $(\delta C_1 + \frac{\delta C_2}{\sqrt{T}})$-Nash equilibrium, where $C_1, C_2$ are game-specific constants.
\end{theorem}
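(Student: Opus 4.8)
The plan is to show that running Algorithm~\ref{alg:rl_loop} at test time is equivalent, in expectation, to sampling a full-game policy from a distribution whose expected value is close to a Nash equilibrium, where the closeness is controlled by the two sources of error: the value network error $\delta$ and the $\mathcal{O}(1/\sqrt{T})$ regret of $T$ iterations of CFR. First I would set up the key observation that at test time, because we pick a \emph{random} CFR iteration and commit to the policy profile on that iteration (including for the opponent's beliefs), the induced play is: solve the subgame at $\beta_r$, sample iteration $t_{\textit{sample}} \sim \mathrm{unif}$, follow $\pi^{t_{\textit{sample}}}$ locally, recurse at the sampled leaf PBS. The correctness of this "sample a random iteration" trick is exactly what makes the search safe without extra constraints — it is the same idea that underlies Theorem~\ref{th:perfect}, and I would lean on it heavily.

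The core argument proceeds by relating the expected game value of this procedure to that of an exact CFR-D run with an exact PBS value oracle. I would split the gap into two pieces. (1) \emph{Value-network error:} replacing the exact infostate-value function with $\hat{v}$, which has error at most $\delta$ per leaf PBS, perturbs every leaf value by at most $\delta$ on every iteration; propagating this through the subgame and through the recursion over the (bounded-length) game tree contributes a term $\delta C_1$, where $C_1$ absorbs the depth of the game and the branching needed to translate per-leaf error into per-PBS value error. This is essentially a Lipschitz/error-propagation bound on the CFR-D computation as a function of its leaf inputs. (2) \emph{Finite-$T$ CFR error:} even with exact leaf values, $T$ iterations of CFR-D in a depth-limited subgame leave each player with regret $\mathcal{O}(1/\sqrt{T})$, hence the average policy is an $\mathcal{O}(1/\sqrt{T})$-equilibrium of the subgame; but since leaf values themselves carry error $\delta$, the equilibrium-finding error in each subgame is scaled, yielding the $\delta C_2/\sqrt{T}$ term rather than a standalone $C_2/\sqrt{T}$. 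I would then argue that playing an approximate equilibrium in every subgame along the recursion, with the leaf PBS value supplied (approximately) correctly, composes into an approximate equilibrium of the whole game — this is the standard "safe resolving" composition, and the per-subgame errors add up over the game's depth into the stated bound.

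The main obstacle is the composition step: showing that committing to a single sampled iteration's policy at each PBS — rather than the average policy — still yields a Nash equilibrium \emph{in expectation over the randomness of the iteration choice}, and that these expectations compose cleanly down the tree without the errors compounding multiplicatively. The subtlety is that $\hat{v}(\beta_r)$ was trained to match the \emph{average} over sampled iterations of $v^{\pi^t}(\beta_r)$ (per Theorem~\ref{th:perfect}), so I must check that the expected infostate values encountered at test time match what the network predicts, i.e. that the distribution over PBSs reached at test time is the same distribution the network was trained on. Concretely I would verify: (a) the leaf PBS reached by sampling $t_{\textit{sample}}$ and following $\pi^{t_{\textit{sample}}}$ is distributed identically to the training-time sampling in Algorithm~\ref{alg:rl_loop} (minus off-policy exploration, which is why that is excluded in the hypothesis); (b) the expected value of infostates under this sampling equals $\hat{v}(\beta_r)$ up to $\delta$; and (c) an induction on remaining game depth closes the loop, with the base case being terminal PBSs (zero error) and the inductive step adding one subgame's worth of $(\delta C_1' + \delta C_2'/\sqrt{T})$ error. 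I expect the bookkeeping in (c) — tracking how the supergradient/infostate-value formulation of Theorem~\ref{th:cfv} interacts with the recursion — to be the most delicate part, but no single step should require new ideas beyond careful error accounting.
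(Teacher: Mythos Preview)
Your proposal is correct and takes essentially the same approach as the paper: induction on game depth, with the key observation that sampling a random CFR iteration and playing $\pi^{t}$ is equivalent in expectation to playing the average policy $\bar{\pi}^{T}$, then invoking the CFR-D bound (Theorem~2 of Burch et al.) at each level to turn $\delta$-accurate leaf values into a $k_2\delta + k_3/\sqrt{T}$ subgame guarantee, and composing over the finitely many levels. The paper's proof is in fact simpler than your outline---it never invokes Theorem~\ref{th:cfv} or any training/test distribution-matching argument, because the hypothesis already grants error $\le \delta$ on \emph{every} leaf PBS that could be encountered during play, so your points (a)--(b) and the supergradient bookkeeping are unnecessary.
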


Since a random iteration is selected, we may select a very early iteration, or even the first iteration, in which the policy is extremely poor. This can be mitigated by using modern equilibrium-finding algorithms, such as Linear CFR~\cite{brown2019solving}, that assign little or no weight to the early iterations.
\vspace{-0.05in}
\section{Experimental Setup}
\label{sec:training}
\vspace{-0.1in}
We measure \textbf{exploitability} of a policy~$\pi^*$, which is $\sum_{i \in \mathcal{N}}\max_{\pi}v_i(\pi, \pi^*_{-i}) / |\mathcal{N}|$. All CFR experiments use alternating-updates Linear CFR~\cite{brown2019solving}. All FP experiments use alternating-updates Linear Optimistic FP, which is a novel variant we present in Appendix~\ref{sec:fp}.

We evaluate on the benchmark imperfect-information games of heads-up no-limit Texas hold'em poker (HUNL) and Liar's Dice. The rules for both games are provided in Appendix~\ref{sec:games}. We also evaluate our techniques on turn endgame hold'em (TEH), a variant of no-limit Texas hold'em in which both players automatically check/call for the first two of the four betting rounds in the game.

In HUNL and TEH, we reduce the action space to at most nine actions using domain knowledge of typical bet sizes. However, our agent responds to any ``off-tree'' action at test time by adding the action to the subgame~\cite{brown2018depth,brown2019superhuman}. The bet sizes and stack sizes are randomized during training. For TEH we train on the full game and measure exploitability on the case of both players having \$20,000, unperturbed bet sizes, and the first four board cards being $3 \spadesuit$7$\heartsuit$T$\diamondsuit$K$\spadesuit$.
For HUNL, our agent uses far less domain knowledge than any prior competitive AI agent.
Appendix~\ref{sec:dk} discusses the poker domain knowledge we leveraged in ReBeL.

We approximate the value and policy functions using artificial neural networks. Both networks are MLPs with GeLU~\cite{hendrycks2016gaussian} activation functions and LayerNorm~\cite{ba2016layer}. Both networks are trained with Adam~\cite{kingma2014adam}. We use pointwise Huber loss as the criterion for the value function and mean squared error (MSE) over probabilities for the policy.
In preliminary experiments we found MSE for the value network and cross entropy for the policy network did worse.
See Appendix~\ref{sec:hyperparams} for the hyperparameters.

We use PyTorch~\cite{paszke2019pytorch} to train the networks. We found data generation to be the bottleneck due to the sequential nature of the FP and CFR algorithms and the evaluation of all leaf nodes on each iteration. For this reason we use a single machine for training and up to 128 machines with 8 GPUs each for data generation.
\vspace{-0.05in}
\section{Experimental Results}
\label{sec:results}
\vspace{-0.1in}
Figure~\ref{fig:turn_cfr} shows ReBeL reaches a level of exploitability in TEH equivalent to running about 125 iterations of full-game tabular CFR. For context, top poker agents typically use between 100 and 1,000 tabular CFR iterations~\cite{bowling2015heads,moravvcik2017deepstack,brown2017superhuman,brown2018depth,brown2019superhuman}. Our self-play algorithm is key to this success; Figure~\ref{fig:turn_cfr} shows a value network trained on random PBSs fails to learn anything valuable.

\begin{figure}[!h]
	\centering
	\includegraphics[width=1\textwidth]{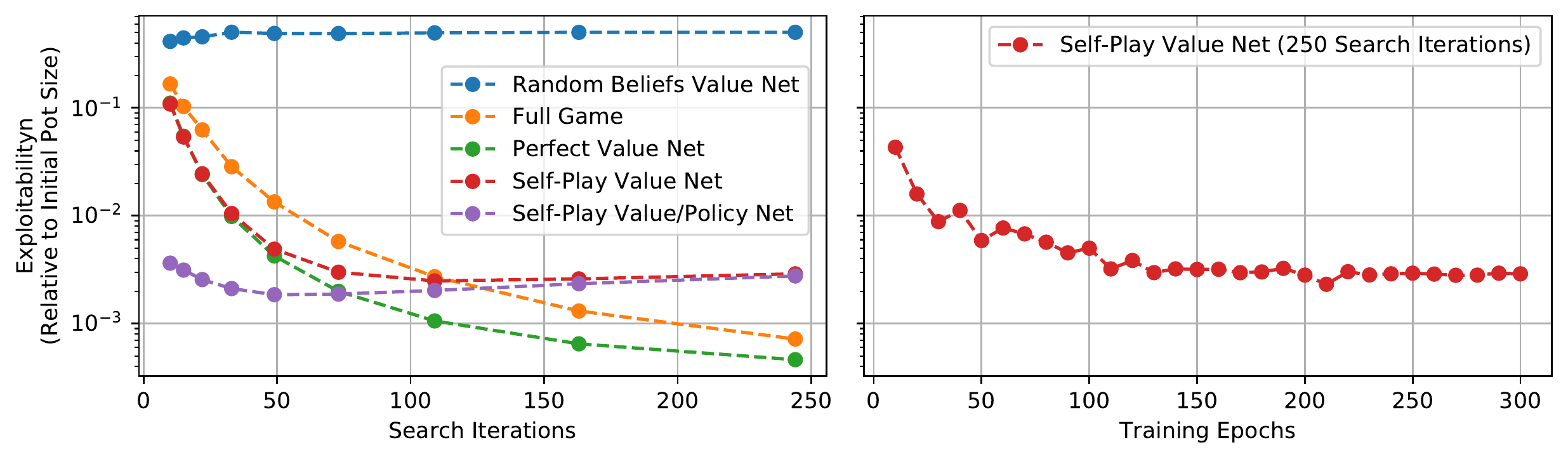}
    \vspace{-0.05in}
	\caption{\small{Convergence of different techniques in TEH. All subgames are solved using CFR-AVG. Perfect Value Net uses an oracle function to return the exact value of leaf nodes on each iteration. Self-Play Value Net uses a value function trained through self play. Self-Play Value/Policy Net additionally uses a policy network to warm start CFR. Random Beliefs trains the value net by sampling PBSs at random.
	}}
	\label{fig:turn_cfr}
\end{figure}

Table~\ref{tab:hunl} shows results for ReBeL in HUNL. We compare ReBeL to BabyTartanian8~\cite{brown2016baby} and Slumbot, prior champions of the Computer Poker Competition, and to the local best response (LBR)~\cite{lisy2017eqilibrium} algorithm. We also present results against Dong Kim, a top human HUNL expert that did best among the four top humans that played against Libratus. Kim played 7,500 hands. Variance was reduced by using AIVAT~\cite{burch2018aivat}. ReBeL played faster than 2 seconds per hand and never needed more than 5 seconds for a decision.

\begin{table*}[ht]
\begin{center}
\begin{tabular}{ l | c c c c }
\toprule
{\bf Bot Name} & {\bf Slumbot} & {\bf BabyTartanian8~\cite{brown2016baby}} & {\bf LBR~\cite{lisy2017eqilibrium}} & {\bf Top Humans}\\
\midrule
DeepStack~\cite{moravvcik2017deepstack}  & - & - & $383 \pm 112$ & -\\
\midrule
Libratus~\cite{brown2017superhuman}  & - & 63 $\pm$ 14 & - & 147 $\pm$ 39\\
\midrule
Modicum~\cite{brown2018depth}  & 11 $\pm$ 5 & 6 $\pm$ 3 & - & -\\
\midrule
\midrule
ReBeL \textit{(Ours)} & 45 $\pm$ 5 & 9 $\pm$ 4 & 881 $\pm$ 94 & 165 $\pm$ 69\\
\bottomrule
\end{tabular}
\end{center}
\vspace{-0.05in}
\caption{\small{Head-to-head results of our agent against benchmark bots BabyTartanian8 and Slumbot, as well as top human expert Dong Kim, measured in thousandths of a big blind per game. We also show performance against LBR~\cite{lisy2017eqilibrium} where the LBR agent must call for the first two betting rounds, and can either fold, call, bet 1$\times$ pot, or bet all-in on the last two rounds. The $\pm$ shows one standard deviation. For Libratus, we list the score against all top humans in aggregate; Libratus beat Dong Kim by 29 with an estimated $\pm$ of 78.
}}
\label{tab:hunl}
\vspace{-0.05in}
\end{table*}

Beyond just poker, Table~\ref{tab:liars} shows ReBeL also converges to an approximate Nash in several versions of Liar's Dice. Of course, tabular CFR does better than ReBeL when using the same number of CFR iterations, but tabular CFR quickly becomes intractable to run as the game grows in size.

\begin{table*}[h!]
\begin{center}
\begin{tabular}{l|rrrr}
\toprule
\textbf{Algorithm} &\textbf{ 1x4f} & \textbf{1x5f} & \textbf{1x6f} & \textbf{2x3f} \\
\midrule
Full-game FP  & 0.012 & 0.024 & 0.039 & 0.057 \\
Full-game CFR & 0.001 & 0.001 & 0.002 & 0.002 \\
\midrule
ReBeL FP & 0.041 & 0.020 & 0.040 & 0.020 \\
ReBeL CFR-D & 0.017 & 0.015 & 0.024 & 0.017 \\
\bottomrule
\end{tabular}
\end{center}
\vspace{-0.05in}
\caption{\small{Exploitability of different algorithms of 4 variants of Liar's Dice: 1 die with 4, 5, or 6 faces and 2 dice with 3 faces. The top two rows represent baseline numbers when a tabular version of the algorithms is run on the entire game for 1,024 iterations. The bottom 2 lines show the performance of ReBeL operating on subgames of depth 2 with 1,024 search iterations. For exploitability computation of the bottom two rows, we averaged the policies of 1,024 playthroughs and thus the numbers are upper bounds on exploitability.}}
\label{tab:liars}
\vspace{-0.05in}
\end{table*}

\vspace{-0.05in}
\section{Conclusions}
\vspace{-0.1in}
We present ReBeL, an algorithm that generalizes the paradigm of self-play reinforcement learning and search to imperfect-information games. We prove that ReBeL computes an approximate Nash equilibrium in two-player zero-sum games, demonstrate convergence in Liar's Dice, and demonstrate that it produces superhuman performance in the benchmark game of heads-up no-limit Texas hold'em.

ReBeL has some limitations that present avenues for future research. Most prominently, the input to its value and policy functions currently grows linearly with the number of infostates in a public state. This is intractable in games such as Recon Chess~\cite{newman2016reconnaissance} that have strategic depth but very little common knowledge. ReBeL's theoretical guarantees are also limited only to two-player zero-sum games.

Nevertheless, ReBeL achieves low exploitability in benchmark games and superhuman performance in heads-up no-limit Texas hold'em while leveraging far less expert knowledge than any prior bot. We view this as a major step toward developing universal techniques for multi-agent interactions.

\vspace{-0.05in}
\section*{Broader Impact}
\vspace{-0.1in}
We believe ReBeL is a major step toward general equilibrium-finding algorithms that can be deployed in large-scale multi-agent settings while requiring relatively little domain knowledge. There are numerous potential future applications of this work, including auctions, negotiations, cybersecurity, and autonomous vehicle navigation, all of which are imperfect-information multi-agent interactions.

The most immediate risk posed by this work is its potential for cheating in recreational games such as poker. While AI algorithms already exist that can achieve superhuman performance in poker, these algorithms generally assume that participants have a certain number of chips or use certain bet sizes. Retraining the algorithms to account for arbitrary chip stacks or unanticipated bet sizes requires more computation than is feasible in real time. However, ReBeL can compute a policy for arbitrary stack sizes and arbitrary bet sizes in seconds.

Partly for this reason, we have decided not to release the code for poker. We instead open source our implementation for Liar's Dice, a recreational game that is not played as competitively by humans. The implementation in Liar's Dice is also easier to understand and the size of Liar's Dice can be more easily adjusted, which we believe makes the game more suitable as a domain for research.

\bibliographystyle{plain}
\bibliography{references}

\newpage

\appendix

\section{List of contributions}

This paper makes several contributions, which we summarize here.

\begin{itemize}
    \item \textbf{RL+Search in two-player zero-sum imperfect-information games.} Prior work has developed RL+Search for two-player zero-sum perfect-information games. There has also been prior work on learning value functions in fully cooperative imperfect-information games~\cite{dibangoye2016optimally} and limited subsets of zero-sum imperfect-information games~\cite{horak2019solving}. However, we are not aware of any prior RL+Search algorithms for two-player zero-sum games in general. We view this as the central contribution of this paper.
    
    \item \textbf{Alternative to safe search techniques.} Theorem~\ref{th:safe} proves that, when doing search at test time with an accurate PBS value function, one can empirically play according to a Nash equilibrium by sampling a random iteration and passing down the beliefs produced by that iteration's policy. This result applies regardless of how the value function was trained and therefore applies to earlier techniques that use a PBS value function, such as DeepStack~\cite{moravvcik2017deepstack}.
    
    \item \textbf{Subgame decomposition via CFR-AVG.} We describe the CFR-AVG algorithm in Appendix~\ref{sec:cfrave}. CFR-D~\cite{burch2014solving} is a way to conduct depth-limited solving of a subgame with CFR when given a value function for PBSs. CFR-D is theoretically sound but has certain properties that may reduce performance in a self-play setting. CFR-AVG is a theoretically sound alternative to CFR-D that does not have these weaknesses. However, in order to implement CFR-AVG efficiently, in our experiments we modify the algorithm in a way that is not theoretically sound but empirically performs well in poker. Whether or not this modified form of CFR-AVG is theoretically sound remains an open question.
    
    \item \textbf{Connection between PBS gradients and infostate values.} Theorem~\ref{th:cfv} proves that all of the algorithms described in this paper can in theory be conducted using only $V_1$, not $\hat{v}$ (that is, a value function that outputs a single value for a PBS, rather than a vector of values for the infostates in the PBS). While this connection does not have immediate practical consequences, it does point toward a way of deploying the ideas in this paper to settings with billions or more infostates per PBS.
    
    \item \textbf{Fictitious Linear Optimistic Play (FLOP).} Section~\ref{sec:fp} introduces FLOP, a novel variant of Fictitious Play that is inspired by recent work on regret minimization algorithms~\cite{brown2019solving}. We show that FLOP empirically achieves near-$O(\frac{1}{T})$ convergence in the limit in both poker and Liar's Dice, does far better than any previous variant of FP, and in some domains is a reasonable alternative to CFR.
    
\end{itemize}
\section{Pseudocode for ReBeL}
\label{sec:code}

Algorithm~\ref{alg:rl_loop2} presents ReBeL in more detail.

We define the average of two policies to be the policy that is, in expectation, identical to picking one of the two policies and playing that policy for the entire game. Formally, if $\pi = \alpha \pi_1 + (1 - \alpha) \pi_2$, then $\pi(s_i) = \frac{\left( x_i^{\pi_1}(s_i) \alpha \right) \pi_1(s_i) + \left( x_i^{\pi_2}(s_i) (1 - \alpha) \right) \pi_2(s_i)}{x_i^{\pi_1}(s_i) \alpha + x_i^{\pi_2}(s_i) (1 - \alpha)}$ where $x_i^{\pi_1}(s_i)$ is the product of the probabilities for all agent~$i$ actions leading to $s_i$. Formally, $x_i^\pi(s_i)$ of infostate $s_i = (O_i^0,a_i^0,O_i^1,a_i^1,...,O_i^t)$ is $x_i^\pi(s_i) = \Pi_{t}(a_i^t)$.

\begin{algorithm}
\caption{ReBeL}

\begin{algorithmic}
\Function{Rebel-Linear-CFR-D}{$\beta_r, \theta^v, \theta^\pi, D^v,D^\pi$} \Comment{$\beta_r$ is the current PBS}
    \While{!\Call{IsTerminal}{$\beta_r$}}
    \State $G \gets $ \Call{ConstructSubgame}{$\beta_r$}
    \State $\bar{\pi}, \pi^{t_{\text{warm}}} \gets $ \Call{InitializePolicy}{$G, \theta^\pi$} \Comment{$t_{\text{warm}} = 0$ and $\pi^0$ is uniform if no warm start}
    \State $G \gets $ \Call{SetLeafValues}{$\beta_r, \pi^{t_{\text{warm}}}, \theta^v$}
    \State $v(\beta_r) \gets $ \Call{ComputeEV}{$G,\pi^{t_\text{warm}}$}
    \State $t_{\textit{sample}} \sim \mathrm{linear}\{t_{\text{warm}}+1, T\}$ \Comment{Probability of sampling iteration $t$ is proportional to $t$}
    \For{$t=(t_{\text{warm}}+1)..T$}
        \If{$t = t_{\textit{sample}}$}
            \State $\beta'_r \gets $ \Call{SampleLeaf}{$G,\pi^{t-1}$} \Comment{Sample one or multiple leaf PBSs}
        \EndIf
        \State $\pi^{t} \gets $ \Call{UpdatePolicy}{$G, \pi^{t-1}$}
        \State $\bar{\pi} \gets \frac{t}{t+2} \bar{\pi} + \frac{2}{t+2} \pi^{t}$
        \State $G \gets $ \Call{SetLeafValues}{$\beta_r, \pi^{t}, \theta^v$}
        \State $v(\beta_r) \gets \frac{t}{t+2} v(\beta_r) + \frac{2}{t+2}$ \Call{ComputeEV}{$G,\pi^{t}$}
    \EndFor
    \State Add $\{\beta_r,v(\beta_r)\}$ to $D^v$ \Comment{Add to value net training data}
    \For{$\beta \in G$} \Comment{Loop over the PBS at every public state in $G$}
    \State Add $\{\beta,\bar{\pi}(\beta)\}$ to $D^\pi$ \Comment{Add to policy net training data (optional)}
    \EndFor
    $\beta_r \gets \beta'_r$
    \EndWhile
\EndFunction
\State
\Function{SetLeafValues}{$\beta,\pi,\theta^v$}
\If{\Call{IsLeaf}{$\beta$}}
\For{$s_i \in \beta$} \Comment{For each infostate $s_i$ corresponding to $\beta$}
\State $v(s_i) = \hat{v}(s_i | \beta, \theta^v)$
\EndFor
\Else
\For{$a \in \mathcal{A}(\beta)$}
\State \Call{SetLeafValues}{$\mathcal{T}(\beta,\pi,a),\pi,\theta^v$}
\EndFor
\EndIf
\EndFunction
\State
\Function{SampleLeaf}{$G,\pi$}
    \State $i^* \sim \mathrm{unif}\{1,N\}$, $h \sim \beta_r$ \Comment{Sample a history randomly from the root PBS and a random player}
    \While{!\Call{IsLeaf}{$h$}}
        \State $c \sim \mathrm{unif}[0, 1]$
        \For{$i=1..N$}
            \If{$i == i^*$ and $c < \epsilon$} \Comment{we set $\epsilon = 0.25$ during training, $\epsilon = 0$ at test time}
                \State sample an action $a_i$ uniform random
            \Else{}
                \State sample an action $a_i$ according to $\pi_i(s_i(h))$
            \EndIf
        \EndFor
        \State $h \sim \tau(h,a)$
    \EndWhile
    \State \Return $\beta_{h}$ \Comment{Return the PBS corresponding to leaf node $h$}
\EndFunction
\end{algorithmic}
\label{alg:rl_loop2}
\end{algorithm}
\section{Description of Games used for Evaluation}
\label{sec:games}

\subsection{Heads-up no-limit Texas hold'em poker (HUNL)}
HUNL is the two-player version of no-limit Texas hold'em poker, which is the most popular variant of poker in the world. For each ``hand'' (game) of poker, each player has some number of chips (the \emph{stack}) in front of them. In our experiments, stack size varies during training between \$5,000 and \$25,000 but during testing is always \$20,000, as is standard in the AI research community. Before play begins, Player~1 commits a \emph{small blind} of \$50 to the pot and Player~2 commits a \emph{big blind} of \$100 to the pot.

Once players commit their blinds, they receive two private cards from a standard 52-card deck. The first of four rounds of betting then occurs. On each round of betting, players take turns deciding whether to fold, call, or raise. If a player folds, the other player receives the money in the pot and the hand immediately ends. If a player calls, that player matches the opponent's number of chips in the pot. If a player raises, that player adds more chips to the pot than the opponent. The initial raise of the round must be at least \$100, and every subsequent raise on the round must be at least as large as the previous raise. A player cannot raise more than either player's stack size. A round ends when both players have acted in the round and the most recent player to act has called. Player~1 acts first on the first round. On every subsequent round, player~2 acts first.

Upon completion of the first round of betting, three \emph{community} cards are publicly revealed. Upon completion of the second betting round, another community card is revealed, and upon completion of the third betting round a final fifth community card is revealed. After the fourth betting round, if no player has folded, then the player with the best five-card poker hand, formed from the player's two private cards and the five community cards, is the winner and takes the money in the pot. In case of a tie, the money is split.

\subsection{Turn endgame hold'em (TEH)}
TEH is identical to HUNL except both players automatically call for the first two betting rounds, and there is an initial \$1,000 per player in the pot at the start of the third betting round. We randomize the stack sizes during training to be between \$5,000 and \$50,000 per player. The action space of TEH is reduced to at most three raise sizes ($0.5\times$ pot, $1\times$ pot, or all-in for the first raise in a round, and $0.75\times$ pot or all-in for subsequent raises), but the raise sizes for non-all-in raises are randomly perturbed by up to $\pm 0.1\times$ pot each game during training. Although we train on randomized stack sizes, bet sizes, and board cards, we measure exploitability on the case of both players having \$20,000, unperturbed bet sizes, and the first four board cards being $3 \spadesuit$7$\heartsuit$T$\diamondsuit$K$\spadesuit$. In this way we can train on a massive game while still measuring NashConv tractably. Even without the randomized stack and bet sizes, TEH has roughly $2 \cdot 10^{11}$ infostates.

\subsection{Liar's Dice}
Liar's Dice is a two-player zero-sum game in our experiments, though in general it can be played with more than two players. At the beginning of a game each player privately rolls $d$ dice with $f$ faces each. After that a betting stage starts where players take turns trying to predict how many dice of a specific kind there are among all the players, e.g., 4 dice with face 5. A player's bid must either be for more dice than the previous player's bid, or the same number of dice but a higher face. The round ends when a player challenges the previous bid (a call of \textit{liar}). If all players together have at least as many dice of the specified face as was predicted by the last bid, then the player who made the bid wins. Otherwise the player who challenged the bid wins. We use the highest face as a \textit{wild} face, i.e., dice with this face count towards a bid for any face.
\section{Domain Knowledge Leveraged in our Poker AI Agent}
\label{sec:dk}

The most prominent form of domain knowledge in our ReBeL poker agent is the simplification of the action space during self play so that there are at most 8 actions at each decision point. The bet sizes are hand-chosen based on conventional poker wisdom and are fixed fractions of the pot, though each bet size is perturbed by $\pm 0.1\times$ pot during training to ensure diversity in the training data.

We specifically chose not to leverage domain knowledge that has been widely used in previous poker AI agents:
\begin{itemize}
    \item All prior top poker agents, including DeepStack~\cite{moravvcik2017deepstack}, Libratus~\cite{brown2017superhuman}, and Pluribus~\cite{brown2019superhuman}, have used \emph{information abstraction} to bucket similar infostates together based on domain-specific features~\cite{johanson2012finding,ganzfried2014potential,brown2015hierarchical}. Even when computing an exact policy, such as during search or when solving a poker game in its entirety~\cite{gilpin2005optimal,bowling2015heads}, past agents have used \emph{lossless abstraction} in which strategically identical infostates are bucketed together. For example, a flush of spades may be strategically identical to a flush of hearts.
    
    Our agent does not use any information abstraction, whether lossy or lossless. The agent computes a unique policy for each infostate. The agent's input to its value and policy network is a probability distribution over pairs of cards for each player, as well as all public board cards, the amount of money in the pot relative to the stacks of the players, and a flag for whether a bet has occurred on this betting round yet.
    
    \item DeepStack trained its value network on random PBSs. In addition to reducing the dimensionality of its value network input by using information abstraction, DeepStack also sampled PBSs according to a handcrafted algorithm that would sample more realistic PBSs compared to sampling uniform random. We show in Section~\ref{sec:results} that training on PBSs sampled uniformly randomly without information abstraction results in extremely poor performance in a value network.
    
    Our agent collects training data purely from self play without any additional heuristics guiding which PBSs are sampled, other than an exploration hyperparameter that was set to $\epsilon = 0.25$ in all experiments.
    
    \item In cases where both players bet all their chips before all board cards are revealed, past poker AIs compute the exact expected value of all possible remaining board card outcomes. This is expensive to do in real time on earlier rounds, so past agents pre-compute this expected value and look it up during training and testing. Using the exact expected value reduces variance and makes learning easier.
    
    Our agent does not use this shortcut. Instead, the agent learns these ``all-in'' expected values on its own. When both agents have bet all their chips, the game proceeds as normal except neither player is allowed to bet.
    
    \item The search space in DeepStack~\cite{moravvcik2017deepstack} extends to the start of the next betting round, except for the third betting round (out of four) where it instead extends to the end of the game. Searching to the end of the game on the third betting round was made tractable by using information abstraction on the fourth betting round (see above). Similarly, Libratus~\cite{brown2017safe}, Modicum~\cite{brown2018depth}, and Pluribus~\cite{brown2019superhuman} all search to the end of the game when on the third betting round. Searching to the end of the game has the major benefit of not requiring the value network to learn values for the end of the third betting round. Thus, instead of the game being three ``levels'' deep, it is only two levels deep. This reduces the potential for propogation of errors.
    
    Our agent always solves to the end of the current betting round, regardless of which round it is on.
    
    \item The depth-limited subgames in DeepStack extended to the start of the next betting round on the second betting round. On the first betting round, it extended to the end of the first betting round for most of training and to the start of the next betting round for the last several CFR iterations. Searching to the start of the next betting round was only tractable due to the abstractions mentioned previously and due to careful optimizations, such as implementing CFR on a GPU.
    
    Our agent always solves to the end of the current betting round regardless of which round it is on. We implement CFR only on a single-thread CPU and avoid any abstractions. Since a subgame starts at the beginning of a betting round and ends at the start of the next betting round, our agent must learn six ``layers'' of values (end of first round, start of second round, end of second round, start of third round, end of third round, start of fourth round) compared to three for DeepStack (end of first round, start of second round, start of third round).
    
    \item DeepStack used a separate value network for each of the three ``layers'' of values (end of first round, start of second round, start of third round). Our agent uses a single value network for all situations.
\end{itemize}
\section{Hyper parameters}\label{sec:hyperparams}
In this section we provide details of the value and policy networks and the training procedures. 

We approximate the value and policy functions using artificial neural networks. The input to the value network consists of three components for both games: agent index, representation of the public state, and a probability distribution over infostates for both agents. For poker, the public state representation consists of the board cards and the common pot size divided by stack size; for Liar's Dice it is the last bid and the acting agent. The output of the network is a vector of values for each possible infostate of the indexed agent, e.g., each possible poker hand she can hold.

We trained a policy network only for poker.
The policy network state representation additionally contains pot size fractions for both agents separately as well as a flag for whether there have been any bets so far in the round. The output is a probability distribution over the legal actions for each infostate.

As explained in section~\ref{sec:training} we use Multilayer perceptron with GeLU~\cite{hendrycks2016gaussian} activation functions and LayerNorm~\cite{ba2016layer} for both value and policy networks. 

For poker we represent the public state as a concatenation of a vector of indices of the board cards, current pot size relative to the stack sizes, and binary flag for the acting player. The size of the full input is
\begin{equation}
1(\text{agent~index})+ 1 (\text{acting~agent}) + 1 (\text{pot}) + 5 (\text{board}) + 2 \times 1326 (\text{infostate~beliefs})
\nonumber
\end{equation}
We use card embedding for the board cards similar to~\cite{brown2019deep} and then apply MLP. Both the value and the policy networks contain 6 hidden layers with 1536 layers each. For all experiments we set the probability to explore a random action to $\epsilon=25\%$ (see Section~\ref{sec:method_high}). To store the training data we use a simple circular buffer of size 12M and sample uniformly. Since our action abstraction contains at most 9 legal actions, the size of the target vector for the policy network is 9 times bigger than one used for the value network. In order to make it manageable, we apply linear quantization to the policy values. As initial data is produced with a random value network, we remove half of the data from the replay buffer after 20 epochs.

For the full game we train the network with Adam optimizer with learning rate $3\times 10^{-4}$ and halved the learning rate every 800 epochs. One epoch is 2,560,000 examples and the batch size 1024. We used 90 DGX-1 machines, each with 8 $\times$ 32GB Nvidia V100 GPUs for data generation. We report results after 1,750 epochs.
For TEH experiments we use higher initial learning rate $4 \times 10^{-4}$, but halve it every 100 epochs.  We report results after 300 epochs.

For Liar's Dice we represent the state as a concatenation of a one hot vector for the last bid and binary flag for the acting player. The size of the full input is $$1(\text{agent~index}) + 1 (\text{acting~agent}) + n_\text{dice} n_\text{faces} (\text{last~bid}) + 2 {n_\text{faces}}^{n_\text{dice}} (\text{infostate~beliefs}).$$ The value network contains 2 hidden layers with 256 layers each. We train the network with Adam optimizer with learning rate $3\times 10^{-4}$ and halved the learning rate every 400 epochs. One epoch is 25,600 examples and the batch size 512. During both training and evaluation we run the search algorithm for 1024 iterations. We use single GPU for training and 60 CPU threads for data generation. We trained the network for 1000 epochs. To reduce the variance in RL+Search results, we evaluated the three last checkpoints and reported averages in table~\ref{tab:liars}.

\subsection{Human Experiments for HUNL}

We evaluated our HUNL agent against Dong Kim, a top human professional specializing in HUNL. Kim was one of four humans that played against Libratus~\cite{brown2017superhuman} in the man-machine competition which Libratus won. Kim lost the least to Libratus. However, due to high variance, it is impossible to statistically compare the performance of the individual humans that participated in the competition.

A total of 7,500 hands were played between Kim and the bot. Kim was able to play from home at his own pace on any schedule he wanted. He was also able to play up to four games simultaneously against the bot. To incentivize strong play, Kim was offered a base compensation of $\$1 \pm \$0.05x $ for each hand played, where $x$ signifies his average win/loss rate in terms of big blinds per hundred hands played. Kim was guaranteed a minimum of \$0.75 per hand and could earn no more than \$2 per hand. Since final compensation was based on the variance-reduced score rather than the raw score, Kim was not aware of his precise performance during the experiment.

The bot played at an extremely fast pace. No decision required more than 5 seconds, and the bot on average plays faster than 2 seconds per hand in self play. To speed up play even further, the bot cached subgames it encountered on the preflop. When the same subgame was encountered again, it would simply reuse the solution it had already computed previously.

Kim's variance-reduced score, which we report in Section~\ref{sec:results}, was a loss of $165 \pm 69$ where the $\pm$ indicates one standard error. His raw score was a loss of $358 \pm 188$.
\section{Proof Related to Value Functions (Theorem~\ref{th:cfv})}
\label{app:dvalue_proof}

We start by proving some preliminary Lemmas. For simplicity, we will sometimes prove results for only one player, but the results hold WLOG for both players.

For some policy profile $\pi=(\pi_1, \pi_2)$, let $v_i^\pi(\beta): \mathcal{B} \to \mathbb{R}^{|S_i|}$ be a function that takes as input a PBS and outputs infostate values for player~$i$.

\begin{lemma}
\label{lemma:v_linear}
Let $V_1^{\pi_2}(\beta)$ be player 1's value at $\beta$ assuming that player~2 plays $\pi_2$ in a 2p0s game. $V_1^{\pi_2}(\beta)$ is linear in $\beta_1$.
\end{lemma}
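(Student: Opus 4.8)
The plan is to expand $V_1^{\pi_2}(\beta)$ in the discrete representation, regroup the sum over histories according to the player-1 infostate each history induces, and then observe that player~1's best response to the fixed policy $\pi_2$ decouples into one independent optimization per root infostate, which makes the dependence on $\beta_1$ manifestly linear. Concretely, I would start from the definition $V_1^{\pi_2}(\beta)=\max_{\pi_1}\sum_{h\in\mathcal H(s_{\text{pub}}(\beta))}p(h\mid\beta)\,v_1^{\langle\pi_1,\pi_2\rangle}(h)$ together with the fact that a PBS factors, $p(h\mid\beta)=\beta_1(s_1(h))\,p\big(h\mid s_1(h),\beta_{-1}\big)$, to rewrite
\[
V_1^{\pi_2}(\beta)=\max_{\pi_1}\ \sum_{s_1\in S_1(s_{\text{pub}})}\beta_1(s_1)\,f(s_1,\pi_1),\qquad f(s_1,\pi_1):=\sum_{h\in\mathcal H(s_1)}p\big(h\mid s_1,\beta_{-1}\big)\,v_1^{\langle\pi_1,\pi_2\rangle}(h).
\]
Here $f(s_1,\pi_1)$ is player~1's opponent-reach-weighted continuation value of the infostate $s_1$, and it contains no $\beta_1$.

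The key step is that this maximization separates across the root infostates. Every player-1 infostate that can occur inside the subgame is an extension of exactly one $s_1\in S_1(s_{\text{pub}})$, and distinct root infostates are distinguishable by player~1, so their sets of descendant infostates are pairwise disjoint and together exhaust player~1's decision points; hence $\pi_1$ decomposes into independent blocks $\pi_1|_{D(s_1)}$, one per root infostate. For $h\in\mathcal H(s_1)$ every history reachable from $h$ carries a player-1 infostate extending $s_1(h)=s_1$, and the fixed $\pi_2$ reacts only to public and player-2-private observations, not to player~1's play in parallel infostates; therefore $f(s_1,\pi_1)$ depends on $\pi_1$ only through $\pi_1|_{D(s_1)}$. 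Since the blocks are disjoint and $\beta_1(s_1)\ge 0$, the maximum distributes over the sum:
\[
V_1^{\pi_2}(\beta)=\sum_{s_1\in S_1(s_{\text{pub}})}\beta_1(s_1)\,\Big(\max_{\pi_1}f(s_1,\pi_1)\Big).
\]
The coefficient $\max_{\pi_1}f(s_1,\pi_1)$ does not depend on $\beta_1$ — it is precisely the best-response infostate value of Eq.~\eqref{eq:infostate_value} with $\pi^*_{-1}=\pi_2$ — so $V_1^{\pi_2}(\beta)$ is a linear function of $\beta_1$ for fixed $\pi_2$ and $\beta_{-1}$, which is the claim.

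I expect the only substantive step to be this separation argument: justifying that, with $\pi_2$ held fixed, player~1 may best-respond inside each root infostate without regard to the others. It rests on two structural features of two-player factored-observation games — that player~1's future infostates always refine its current one (so continuation values never see player~1's counterfactual play) and that a fixed $\pi_2$ is a function of player~2's infostate alone — so both the $N=2$ restriction and the ``fixed $\pi_2$'' hypothesis are essential and should be used explicitly. The remaining ingredients (the product form of $p(h\mid\beta)$, and finiteness of the subgame so that each $\max_{\pi_1}f(s_1,\cdot)$ is attained) are already part of the setup and only need to be invoked.
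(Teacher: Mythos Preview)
Your proof is correct and follows the same route as the paper: both arrive at the decomposition $V_1^{\pi_2}(\beta)=\sum_{s_1}\beta_1(s_1)\,v_1^{(BR(\pi_2),\pi_2)}(s_1\mid\beta)$ and read off linearity in $\beta_1$. The paper's proof is a one-liner that invokes the definition in Eq.~\eqref{eq:infostate_value} directly, whereas you have supplied the separation argument (that the best-response maximization decouples across root infostates because descendant infostates partition by root and $\pi_2$ is fixed) that the paper leaves implicit.
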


\begin{proof}
This follows directly from
%Lemma \ref{lemma:v1_indep_pi1}
the definition of $v_i^{\pi}(s_i|\beta)$
along with the definition of $V_1$, $$V_1^{\pi_2}(\beta) = \sum_{s_1 \in S_1(s_\text{pub})} {\beta_1(s_1) v_1(s_1|\beta, (BR(\pi_2), \pi_2))}$$
\end{proof}

\begin{lemma}
\label{lemma:v_concave}
$V_1(\beta) = \min_{\pi_2}{V_1^{\pi_2}(\beta)}$, and the set of $\pi_2$ that attain $V_1(\beta)$ at $\beta_0$ are precisely the Nash equilibrium policies at $\beta_0$. This also implies that $V_1(\beta)$ is concave.
\end{lemma}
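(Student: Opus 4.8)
The plan is to deduce the lemma from von Neumann's minimax theorem applied to the subgame rooted at $\beta$, which is a finite two-player zero-sum extensive-form game with perfect recall (the underlying game is finite, and each agent recalls its own observations and actions). First I would record the identity $V_1(\beta)=\min_{\pi_2}V_1^{\pi_2}(\beta)$. By definition $V_1(\beta)$ is the common value of every Nash equilibrium of that subgame, and $V_1^{\pi_2}(\beta)$ is player~1's value when player~2 commits to $\pi_2$ and player~1 best responds, i.e.\ $V_1^{\pi_2}(\beta)=\max_{\pi_1}V_1^{\langle\pi_1,\pi_2\rangle}(\beta)$ --- this is exactly the quantity used through $BR(\pi_2)$ in the proof of Lemma~\ref{lemma:v_linear}. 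The minimax theorem (using perfect recall to pass freely between behavioral policies and mixed strategies, and compactness/continuity so that the extrema are attained) gives $V_1(\beta)=\max_{\pi_1}\min_{\pi_2}V_1^{\langle\pi_1,\pi_2\rangle}(\beta)=\min_{\pi_2}\max_{\pi_1}V_1^{\langle\pi_1,\pi_2\rangle}(\beta)=\min_{\pi_2}V_1^{\pi_2}(\beta)$.

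Next I would characterize the minimizers at a fixed $\beta_0$. For the easy direction, if $\pi_2^*$ is a Nash equilibrium policy, pick $\pi_1^*$ with $(\pi_1^*,\pi_2^*)$ a Nash equilibrium; then $\pi_1^*$ best responds to $\pi_2^*$, so $V_1^{\pi_2^*}(\beta_0)=V_1^{\langle\pi_1^*,\pi_2^*\rangle}(\beta_0)=V_1(\beta_0)$, the last equality because the equilibrium value equals the (unique) value of the subgame; hence $\pi_2^*$ attains $\min_{\pi_2}V_1^{\pi_2}(\beta_0)$. Conversely, suppose $V_1^{\pi_2^*}(\beta_0)=V_1(\beta_0)$. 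Let $\pi_1^*$ attain $\max_{\pi_1}\min_{\pi_2}V_1^{\langle\pi_1,\pi_2\rangle}(\beta_0)$ (it exists by the minimax theorem), so $\min_{\pi_2}V_1^{\langle\pi_1^*,\pi_2\rangle}(\beta_0)=V_1(\beta_0)$. Then $V_1(\beta_0)\le V_1^{\langle\pi_1^*,\pi_2^*\rangle}(\beta_0)\le\max_{\pi_1}V_1^{\langle\pi_1,\pi_2^*\rangle}(\beta_0)=V_1^{\pi_2^*}(\beta_0)=V_1(\beta_0)$, so all these coincide; thus $\pi_1^*$ attains $\max_{\pi_1}V_1^{\langle\pi_1,\pi_2^*\rangle}(\beta_0)$ (a best response to $\pi_2^*$) and $\pi_2^*$ attains $\min_{\pi_2}V_1^{\langle\pi_1^*,\pi_2\rangle}(\beta_0)$ (a best response to $\pi_1^*$), so $(\pi_1^*,\pi_2^*)$ is a Nash equilibrium and $\pi_2^*$ is a Nash equilibrium policy.

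Finally, concavity follows immediately: fixing $\beta_2$ and extending $\beta_1$ to unnormalized beliefs, Lemma~\ref{lemma:v_linear} says each map $\beta_1\mapsto V_1^{\pi_2}(\beta_1,\beta_2)$ is linear --- its coefficients $v_1(s_1\mid\beta,(BR(\pi_2),\pi_2))$ do not depend on $\beta_1$, so the formula extends verbatim to unnormalized $\beta_1$ --- and $V_1(\beta)=\min_{\pi_2}V_1^{\pi_2}(\beta)$ is then a pointwise infimum of linear functions of $\beta_1$, hence concave; this is the concavity invoked when speaking of supergradients of $V_1$ in Theorem~\ref{th:cfv}. I expect the only genuine obstacle to be the first step: one must be a little careful to confirm that the subgame rooted at a PBS is indeed a finite perfect-recall two-player zero-sum game, so that the $\min$--$\max$ interchange (and, with it, the well-definedness of $V_1(\beta)$ already asserted in Section~\ref{sec:pbs}) is licensed; everything after that is the standard dictionary relating ``attains the value'', ``minimax-optimal strategy'', and ``component of a Nash equilibrium'' in zero-sum games.
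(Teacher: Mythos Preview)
Your proof is correct and follows essentially the same route as the paper: the identity $V_1(\beta)=\min_{\pi_2}V_1^{\pi_2}(\beta)$ together with the characterization of the minimizers as Nash equilibrium policies, and then concavity via Lemma~\ref{lemma:v_linear} as a pointwise infimum of linear functions. The paper's own proof is much terser---it simply asserts the minimax identity and the characterization ``by definition''---whereas you spell out the appeal to von Neumann's minimax theorem and the standard saddle-point argument; this added rigor is fine and does not diverge from the paper's approach.
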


\begin{proof}
By definition, the Nash equilibrium at $\beta$ is the minimum among all choices of $\pi_2$ of the value to player 1 of her best response to $\pi_2$. Any $\pi_2$ that achieves this Nash equilibrium value when playing a best response is a Nash equilibrium policy.

From Lemma \ref{lemma:v_linear}, we know that each $V_1^{\pi_2}(\beta)$ is linear, which implies that $V_1(\beta)$ is concave since any function that is the minimum of linear functions is concave.
\end{proof}

\begin{figure}[!h]
    \label{fig:v1_diagram}
    \centering
	\includegraphics[width=100mm]{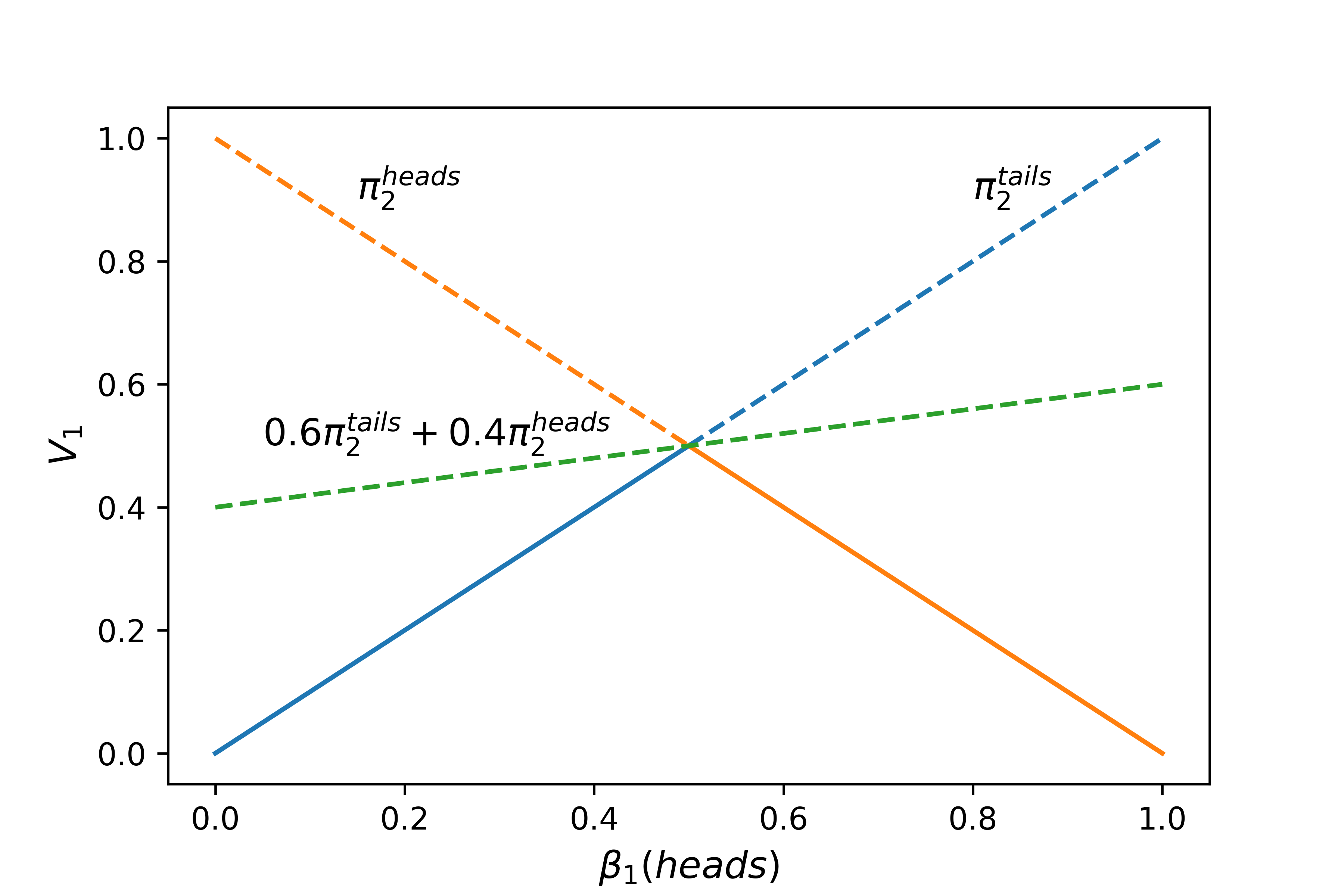}
	\caption{\small Illustration of Lemma \ref{lemma:v_concave}. In this simple example, the subgame begins with some probability $\beta(heads)$ of a coin being heads-up, which player 1 observes. Player 2 then guesses if the coin is heads or tails, and wins if he guesses correctly. The payoffs for Player 2's pure strategies are shown as the lines marked $\pi_2^{heads}$ and $\pi_2^{tails}$. The payoffs for a mixed strategy is a linear combination of the pure strategies. The value for player 1 is the minimum among all the lines corresponding to player 2 strategies, denoted by the solid lines.}
\end{figure}

Now we can turn to proving the Theorem. 

Consider a function $\tilde{V}_1$ that is an extension of $V_1$ to unnormalized probability distributions over $S_1$ and $S_2$; i.e. $\tilde{V}_i((s_{pub}, b_1, b_2)) = V_i((s_\text{pub}, b_1/|b_1|_1, b_2/|b_2|_1))$. $\tilde{V}_i = V_i$ on the simplex of valid beliefs, but we extend it in this way to $\mathbb{R}_{\geq 0}^{|s_1|} \setminus \vec{0}$ so that we can consider gradients w.r.t. $p(s_1)$.

We will use the term `supergradient' to be the equivalent of the subgradient for concave functions. Formally, $g$ is a supergradient of concave function $F$ at $x_0$ iff for any $x$ in the domain of $F$, $$F(x) - F(x_0) \leq g\cdot(x - x_0).$$ Also, $superg(F)=-subg(-F)$.

\begin{theorem-non}[Restatement of Theorem \ref{th:cfv}]
For any belief $\beta=(\beta_1,\beta_2)$ (for the beliefs over player 1 and 2 infostates respectively) and any policy $\pi^*$ that is a Nash equilibrium of the subgame rooted at $\beta$,

\begin{equation}
    v_1^{\pi^*}(s_1|\beta) = V_1(\beta) +\bar{g} \cdot \hat{s}_1
\end{equation} 
for some supergradient $\bar{g}$ of $\tilde{V}_1(\beta)$ with respect to $\beta_{1}$, where $\hat{s}_1$ is the unit vector in direction $s_1$.
\end{theorem-non}

\begin{proof}

Lemma \ref{lemma:v_concave} shows that $V_1(\beta)$ is a concave function in $\beta_1$, and its extension $\tilde{V}$ off the simplex is constant perpendicular to the simplex, so $\tilde{V}$ is concave as well. Therefore the notion of a supergradient is well-defined.

Now, consider some policy profile $\pi^*=(\pi_1^*,\pi_2^*)$ that is a Nash equilibrium of $G(\beta_0)$. $V_1^{\pi_2^*}$ is a linear function and tangent to $V_1$ at $\beta_0$; therefore, its gradient is a supergradient of $V_1$ at $\beta_0$. Its gradient is

\begin{align}
    \nabla_{\beta_1} V_1^{\pi_2^*}(\beta) & = \nabla_{\beta_1}  \sum_{s_1 \in S_1(s_\text{pub})} {\beta_1(s_1) v_1^{\pi_2^*}(s_1|\beta)} \label{eq:gradv} \\
    & = \sum_{s_1 \in S_1(s_\text{pub})} \hat{s}_1 v_1^{\pi_2^*}(s_1|\beta) + \beta(s_1) \nabla_{\beta_1} v^{\pi_2^*}(s_1|\beta) \\
    & = \sum_{s_1 \in S_1(s_\text{pub})} \hat{s}_1 v_1^{\pi_2^*}(s_1|\beta)
    \label{eq:grad_indep}
\end{align}

Equation \ref{eq:grad_indep} follows from the fact that for a fixed opponent policy $\pi_2^*$, each player 1 infostate $s_1$ is an independent MDP whose value $v_1^{\pi_2^*}(s_1|\beta)$ doesn't depend on the probabilities $\beta_1$ of being in the different infostates (although it might depend on $\beta_2$ since player 1 doesn't observe these).

Note also that the gradient of $V_1^{\pi_2^*}$ is correct even when some infostates $s_1$ have probability 0, due to the fact that we defined $v_1^{\pi_2}(s_1|\beta)$ as the value of player 1 playing a \textit{best response} to $\pi_2$ at each infostate $s_1$ (rather than just playing the equilibrium policy $\pi_1^*$, which may play arbitrarily at unvisited infostates).

Finally, let's compute $g \cdot \hat{s}_1$ at some $\beta_1$ on the simplex (i.e. $|\beta_1|_1=1$).

\begin{align}
    g &= 
    \nabla_{\beta_1/|\beta_1|_1} V_1^{\pi^*}(s_\text{pub}, \beta_1/|\beta_1|, \beta_2) 
    \cdot
    \frac{d}{d\beta_1} \left(\frac{\beta_1}{|\beta_1|_1} \right) \hspace{1cm} \text{(chain rule)}\\
    &=\left( \sum_{s_1' \in S_1(s_\text{pub})} \hat{s}_1' v_1^{\pi^*}(s_1'|\beta) \right) \cdot (|\beta_1|_1 - \beta_1)/(|\beta_1|_1)^2 \hspace{1cm} \text{(Eq. \ref{eq:gradv})}\\
    &=\left( \sum_{s_1' \in S_1(s_\text{pub})} \hat{s}_1' v_1^{\pi^*}(s_1'|\beta) \right) \cdot (1 - \beta_1)\hspace{1cm} \text{(since $|\beta_1|_1=1$)}\\
    &= \sum_{s_1' \in S_1(s_\text{pub})} \hat{s}_1' v_1^{\pi^*}(s_1'|\beta) - \sum_{s_1' \in S_1(s_\text{pub})} \beta_1(s_1')  v_1^{\pi^*}(s_1'|\beta) \\
    &= \sum_{s_1' \in S_1(s_\text{pub})} \hat{s}_1' v_1^{\pi^*}(s_1'|\beta) - V_1(\beta) \\
    g \cdot \hat{s}_1 & = v_1^{\pi^*}(s_1|\beta) - V_1(\beta)
\end{align}

And we're done.

\end{proof}
\section{Proofs Related to Subgame Solving (Theorems \ref{th:perfect} and \ref{th:safe})}
\label{sec:proof_safe}

\begin{lemma}
\label{lemma:recursive_ev}
Running Algorithm~\ref{alg:rl_loop} for $N \rightarrow \infty$ times in a depth-limited subgame rooted at PBS~$\beta_r$ will compute an infostate value vector $v_i^{\pi^*}(\beta_r)$ corresponding to the values of the infostates when $\pi^*$ is played in the (not depth-limited) subgame rooted at $\beta_r$, where $\pi^*$ is a $\frac{C}{\sqrt{T}}$-Nash equilibrium for some constant $C$.
\end{lemma}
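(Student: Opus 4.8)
The plan is to prove the lemma by induction on the number $k$ of depth-limited subgames that separate $\beta_r$ from the end of the game, which is finite since the game is finite. The inductive hypothesis is: for every PBS $\beta'$ that is at most $k-1$ subgames from the end, running Algorithm~\ref{alg:rl_loop} to convergence in $N$ makes the idealized approximator return $\hat{v}(\cdot\mid\beta') = v_i^{\pi^*_{\beta'}}(\beta')$ for some policy profile $\pi^*_{\beta'}$ that is a $\frac{C}{\sqrt{T}}$-Nash equilibrium of the (not depth-limited) subgame rooted at $\beta'$, with $C$ a game-dependent constant. The base case $k=1$ is a subgame all of whose leaves are terminal, so CFR-D is ordinary CFR: the standard regret bound gives that the average profile $\bar{\pi}^T$ is a $\frac{C_0}{\sqrt{T}}$-Nash equilibrium for a constant $C_0$ depending only on the number of infostates and the payoff range, and the stored value vector is exact for $\bar{\pi}^T$.

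For the inductive step, fix $\beta_r$ at level $k$ and the subgame $G$ that ReBeL constructs. Since CFR-D is deterministic given its leaf-value function and the game is finite, only finitely many leaf PBSs $\beta_z^{\pi^t}$ arise over the $T$ iterations; by the inductive hypothesis each is eventually returned correctly by the idealized approximator, so after finitely many further self-play episodes the $T$-iteration CFR-D run inside $G$ stabilizes to the one that uses the exact values $\hat{v}(\cdot\mid\beta_z^{\pi^t}) = v_i^{\pi^*_{\beta_z^{\pi^t}}}(\beta_z^{\pi^t})$ at its leaves. Let $\pi^*$ be the (shared-seed-correlated) policy profile that draws $t\sim\mathrm{unif}\{1,\dots,T\}$, has both players play the iterate $\pi^t$ inside $G$, and upon reaching a leaf $z$ play the continuation $\pi^*_{\beta_z^{\pi^t}}$. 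By linearity of expected value over the draw of $t$, $v_i^{\pi^*}(s_i\mid\beta_r) = \frac{1}{T}\sum_{t=1}^T v_i^{\pi^t}(s_i\mid\beta_r)$ for every root infostate $s_i$, which is exactly the vector Algorithm~\ref{alg:rl_loop} averages and stores; this gives the first half of the claim.

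It remains to bound the exploitability of $\pi^*$, for which I would use two facts. First, CFR-D soundness: with the \emph{exact} equilibrium continuation values at $G$'s leaves the average profile $\bar{\pi}^T$ has regret $O(1/\sqrt{T})$ in $G$, and substituting instead the $O(1/\sqrt{T})$-inaccurate values supplied by the hypothesis perturbs the value of every infostate of $G$ by at most a game-dependent constant times $O(1/\sqrt{T})$; hence $\bar{\pi}^T$ together with the continuations $\pi^*_{\beta_z}$ is an $O(1/\sqrt{T})$-Nash equilibrium of the full subgame at $\beta_r$, with a constant $C$ that absorbs the bounded depth factor and can be fixed once for the whole game, closing the induction on the constant. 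Second, the exploitability metric depends only on the marginal $\pi^*_{-i}$, and a deviator cannot observe the shared seed $t$, so $\max_{\pi_i} v_i(\pi_i,\pi^*_{-i}) = \max_{\pi_i}\frac{1}{T}\sum_t v_i(\pi_i,\pi^t_{-i}) = \max_{\pi_i} v_i(\pi_i,\bar{\pi}^T_{-i})$, because expected value is linear in a player's sequence-form realization plan and the CFR average strategy's realization plan is the mean of the iterates'; thus $\pi^*$ is exactly as exploitable as $\bar{\pi}^T$. Taking $N\to\infty$ then makes the value stored at $\beta_r$ converge to this stationary $v_i^{\pi^*}(\beta_r)$, completing the lemma; the argument for CFR-AVG and FP is identical with $\bar{\pi}^t$ in place of $\pi^t$ where appropriate.

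The step I expect to be the main obstacle is the error-propagation estimate inside the first fact: showing rigorously that feeding $O(1/\sqrt{T})$-inaccurate leaf values into CFR-D degrades the equilibrium quality of the resulting combined policy by at most a multiplicative game-dependent constant, and --- crucially --- that this constant does not compound uncontrollably as the induction climbs the finitely many levels of the game, so that a single $C$ works uniformly. A secondary subtlety is the stabilization claim, i.e.\ that the finite CFR-D trajectory inside $G$ becomes fixed once all lower levels have been learned; this relies on the determinism of CFR-D and finiteness of the game, but must be stated carefully because the set of leaf PBSs at which the value network is required to be accurate is itself defined by that trajectory.
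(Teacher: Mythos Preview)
Your approach is essentially the paper's: induction on the number of subgame levels to the end of the game, base case a subgame with only terminal leaves solved by plain CFR, inductive step invoking the CFR-D guarantee of Burch et al.\ once the leaf values have stabilized. Two differences are worth flagging. First, the paper resolves your ``secondary subtlety'' (that the set of leaf PBSs requiring accurate values is itself defined by the CFR-D trajectory) by an explicit inner induction on the iteration index~$t$: once the values at $\beta_z^{\pi^{t'}}$ for $t'\le t$ are fixed, determinism pins down $\pi^{t+1}$, and positive-probability sampling of iteration $t{+}1$ then guarantees each reachable $\beta_z^{\pi^{t+1}}$ is visited infinitely often, so its value is eventually learned; you assert this stabilization in one sentence, whereas the paper spells out the bootstrap. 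Second, the paper does not engage your ``main obstacle'' at all: it simply writes ``due to CFR-D'' and cites~\cite{burch2014solving} for the exploitability bound, without tracking how the constant behaves across levels, so your concern about compounding is legitimate but no more resolved in the paper than in your sketch. Your explicit construction of $\pi^*$ as the shared-seed mixture over iterates, together with the sequence-form argument that its exploitability coincides with that of $\bar{\pi}^T$, is actually more careful than the paper's treatment here; the paper leans on that same equivalence only later, in the proof of Theorem~\ref{th:safe}.
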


\begin{proof}
A key part of our proof is the insight that Algorithm~\ref{alg:rl_loop} resembles the CFR-D algorithm from~\cite{burch2014solving} if Algorithm~\ref{alg:rl_loop} were modified such that there was no random sampling and every call to the value network was replaced with a recursive call to the CFR-D algorithm.

Suppose the subgame rooted at $\beta_r$ extends to the end of the game and therefore there are no calls to the neural network. Then CFR is proven to compute a $\frac{C}{\sqrt{T}}$-Nash equilibrium~\cite{zinkevich2008regret}, which we will call $\pi^*$, and Algorithm~\ref{alg:rl_loop} will indeed learn a value vector $v_i^{\pi^*}(\beta_r)$ for $\beta_r$ corresponding to the infostate values of $\beta_r$ when $\pi^*$ is played in the subgame. Thus, the base case for the inductive proof holds.

Now suppose we have a depth-limited subgame rooted at $\beta_r$. Assume that for every leaf PBS $\beta_z^{\pi^t}$ for iterations $t \le T$, where $\pi^t$ is the policy in the subgame on iteration~$t$, that we have already computed an infostate value vector $v_i^{\pi^*}(\beta^{\pi^t}_z)$ corresponding to the values of the infostates in $\beta^{\pi^t}_z$ when $\pi^*$ is played in the subgame rooted at $\beta^{\pi^t}_z$, where $\pi^*$ is a $\frac{C}{\sqrt{T}}$-Nash equilibrium for some constant $C$. Assume also that Lemma~\ref{lemma:recursive_ev} holds for all leaf PBSs $\beta_z^{\pi^{T+1}}$.

First, for leaf PBSs that neither player reaches with positive probability, the values for the leaf PBS have no affect on CFR or the values computed for the root infostates because CFR weights the values by the probability the player reaches the PBS~\cite{zinkevich2008regret}.

Now consider a leaf PBS $\beta_z^{\pi^{T+1}}$ that some player reaches with positive probability. Since $v_i^{\pi^*}(\beta_z^{\pi^t})$ has already been computed for all $\beta_z^{\pi^t}$ and since we are running a deterministic algorithm, so $v_i^{\pi^*}(\beta_z^{\pi^t})$ will not change with subsequent calls of Algorithm~\ref{alg:rl_loop} on $\beta_z^{\pi^t}$ for all $t \le T$. Thus, $\pi^t$ will be the same for all $t \le T$. Since Algorithm~\ref{alg:rl_loop} samples a random CFR iteration, and since the leaf PBS $\beta_z^{\pi^{T+1}}$ is sampled with positive probability for some player when iteration~$T+1$ is sampled, so the algorithm will sample $\beta_z^{\pi^{T+1}}$ $N'$ times, where $N' \rightarrow \infty$ as $N \rightarrow \infty$. Since Lemma~\ref{lemma:recursive_ev} holds for $\beta_z^{\pi^{T+1}}$, so eventually $v_i^{\pi^*}(\beta_z^{\pi^{T+1}})$ will be computed for $\beta_z^{\pi^{T+1}}$. Therefore, due to CFR-D~\cite{burch2014solving}, Lemma~\ref{lemma:recursive_ev} will hold for $\beta_r$ and the inductive step is proven.
\end{proof}

\begin{theorem-non}[Restatement of Theorem \ref{th:perfect}]
Consider an idealized value approximator that returns the most recent sample of the value for sampled PBSs, and 0 otherwise. Running Algorithm~\ref{alg:rl_loop} with $T$ iterations of CFR in each subgame will produce a value approximator that produces values that correspond to a $\frac{C}{\sqrt{T}}$-equilibrium policy for any PBS that could be encountered during play, where $C$ is a game-dependent constant.
\end{theorem-non}

\begin{proof}
Since we run Algorithm~\ref{alg:rl_loop} for $N \rightarrow \infty$ times at the root of the game, so by Lemma~\ref{lemma:recursive_ev}, Theorem~\ref{th:perfect} is true.
\end{proof}

\begin{theorem-non}[Restatement of Theorem~\ref{th:safe}]
If Algorithm~\ref{alg:rl_loop} is run at test time with no off-policy exploration, a value network that has error at most $\delta$ for any leaf PBS, and with $T$ iterations of CFR being used to solve subgames, then the algorithm plays a $(\delta C_1 + \frac{\delta C_2}{\sqrt{T}})$-Nash equilibrium, where $C_1, C_2$ are game-specific constants. 
\end{theorem-non}

\begin{proof}
We prove the theorem inductively. Consider first a subgame near the end of the game that is not depth-limited. I.e., it has no leaf nodes. Clearly, the policy $\pi^*$ that Algorithm~\ref{alg:rl_loop} using CFR plays in expectation is a $\frac{k_1}{\sqrt{T}}$-Nash equilibrium for game-specific constant $k_1$ in this subgame.

Rather than play the average policy over all $T$ iterations~$\bar{\pi}^T$, one can equivalently pick a random iteration $t \sim \mathrm{uniform}\{1,T\}$ and play according to~$\pi^t$, the policy on iteration~$t$. This algorithm is also a $\frac{k_1}{\sqrt{T}}$-Nash equilibrium in expectation.

Next, consider a depth-limited subgame $G$ such that for any leaf PBS $\beta^t$ on any CFR iteration~$t$, the policy that Algorithm~\ref{alg:rl_loop} plays in the subgame rooted at $\beta^t$ is in expectation a $\delta$-Nash equilibrium in the subgame. If one computes a policy for $G$ using tabular CFR-D~\cite{burch2014solving} (or, as discussed in Section~\ref{sec:cfrave}, using CFR-AVG), then by Theorem~2 in~\cite{burch2014solving}, the average policy over all iterations is $k_2 \delta + \frac{k_3}{\sqrt{T}}$-Nash equilibrium.

Just as before, rather than play according to this average policy~$\bar{\pi}^T$, one can equivalently pick a random iteration $t \sim \mathrm{uniform}\{1,T\}$ and play according to $\pi^t$. Doing so would also result in a $k_2 \delta + \frac{k_3}{\sqrt{T}}$-Nash equilibrium in expectation. This is exactly what Algorithm~\ref{alg:rl_loop} does.

Since there are a finite number of ``levels'' in a game, which is a game-specific constant, Algorithm~\ref{alg:rl_loop} plays according to a $\delta C_1 + \frac{\delta C_2}{\sqrt{T}}$-Nash equilibrium.

\end{proof}

\section{Fictitious Linear Optimistic Play}
\label{sec:fp}
Fictitious Play (FP)~\cite{brown1951iterative} is an extremely simple iterative algorithm that is proven to converge to a Nash equilibrium in two-player zero-sum games. However, in practice it does so at an extremely slow rate. On the first iteration, all agents choose a uniform policy $\pi_i^0$ and the average policy $\bar{\pi}^0_i$ is set identically. On each subsequent iteration~$t$, agents compute a best response to the other agents' average policy $\pi_i^t = \argmax_{\pi_i}v_i(\pi_i, \bar{\pi}_{-i}^{t-1})$ and update their average policies to be $\bar{\pi}_i^t = \frac{t-1}{t} \bar{\pi}_i^{t-1} + \frac{1}{t} \pi_{i}^{t}$. As $t \to \infty$, $\bar{\pi}^t$ converges to a Nash equilibrium in two-player zero-sum games.

It has also been proven that a family of algorithms similar to FP known as \textbf{generalized weakened fictitious play (GWFP)} also converge to a Nash equilibrium so long as they satisfy certain properties~\cite{van2000weakened,leslie2006generalised}, mostly notably that in the limit the policies on each iteration converge to best responses.

In this section we introduce a novel variant of FP we call \textbf{Fictitious Linear Optimistic Play (FLOP)} which is a form of GWFP. FLOP is inspired by related variants in CFR, in particular Linear CFR~\cite{brown2019solving}.
FLOP converges to a Nash equilibrium much faster than FP while still being an extremely simple algorithm. However, variants of CFR such as Linear CFR and Discounted CFR~\cite{brown2019solving} still converge much faster in most large-scale games.

In FLOP, the initial policy $\pi_i^0$ is uniform. On each subsequent iteration~$t$, agents compute a best response to an \emph{optimistic}~\cite{chiang2012online,rakhlin2013online,syrgkanis2015fast} form of the opponent's average policy in which $\pi_{-i}^{t-1}$ is given extra weight: $\pi_i^t = \argmax_{\pi_i}v_i(\pi_i, \frac{t}{t+2}\bar{\pi}_{-i}^{t-1} + \frac{2}{t+2}\pi_{-i}^{t-1})$. The average policy is updated to be $\bar{\pi}_i^t = \frac{t - 1}{t + 1} \bar{\pi}_i^{t-1} + \frac{2}{t+1} \pi_i^t$. Theorem~\ref{th:flop} proves that FLOP is a form of GWFP and therefore converges to a Nash equilibrium as $t \to \infty$.

\begin{theorem}
\label{th:flop}
FLOP is a form of Generalized Weakened Fictitious Play.
\end{theorem}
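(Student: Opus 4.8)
The plan is to verify that FLOP satisfies the defining conditions of Generalized Weakened Fictitious Play (GWFP) as laid out in \cite{van2000weakened,leslie2006generalised}. Recall that a process $(\bar{\pi}^t)$ is a GWFP process if the updates have the form $\bar{\pi}^{t+1} \in (1-\alpha_{t+1})\bar{\pi}^t + \alpha_{t+1}\big(\mathrm{BR}_{\epsilon_t}(\bar{\pi}^t) + M^{t+1}\big)$, where $\alpha_t \to 0$ with $\sum_t \alpha_t = \infty$, where $\epsilon_t \to 0$, where $\mathrm{BR}_\epsilon$ denotes the $\epsilon$-best-response correspondence, and where $(M^t)$ is a bounded perturbation sequence satisfying the usual martingale-type vanishing condition (which is trivial here since FLOP is deterministic, so we may take $M^t \equiv 0$). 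So the argument reduces to two things: identifying the step sizes $\alpha_t$ and checking $\sum \alpha_t = \infty$, $\alpha_t \to 0$; and showing that the optimistic best response $\pi_i^t = \argmax_{\pi_i} v_i\big(\pi_i, \tfrac{t}{t+2}\bar{\pi}_{-i}^{t-1} + \tfrac{2}{t+2}\pi_{-i}^{t-1}\big)$ is an $\epsilon_t$-best response to $\bar{\pi}_{-i}^{t-1}$ with $\epsilon_t \to 0$.

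First I would read off the step size. The average update is $\bar{\pi}_i^t = \tfrac{t-1}{t+1}\bar{\pi}_i^{t-1} + \tfrac{2}{t+1}\pi_i^t$; these coefficients sum to $1$, so this is a genuine convex combination with $\alpha_t = \tfrac{2}{t+1}$. Clearly $\alpha_t \to 0$ and $\sum_{t} \tfrac{2}{t+1} = \infty$, so the step-size hypotheses of GWFP hold. (One should also note the standard fact that $\bar{\pi}^t$ stays in the simplex, which is immediate by induction since each update is convex.)

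The substantive step is the $\epsilon_t$-best-response bound. Write $\sigma^t_{-i} = \tfrac{t}{t+2}\bar{\pi}_{-i}^{t-1} + \tfrac{2}{t+2}\pi_{-i}^{t-1}$ for the optimistic opponent profile, so $\pi_i^t$ is an exact best response to $\sigma^t_{-i}$. I would bound the loss incurred by playing $\pi_i^t$ against the true target $\bar{\pi}_{-i}^{t-1}$ instead. Since payoffs are multilinear in the mixed strategies and bounded — say $|v_i| \le \Delta$ for a game-dependent constant $\Delta$ — the map $\rho_{-i} \mapsto v_i(\pi_i, \rho_{-i})$ is Lipschitz in $\rho_{-i}$ uniformly in $\pi_i$, with constant proportional to $\Delta$. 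The discrepancy between $\sigma^t_{-i}$ and $\bar{\pi}_{-i}^{t-1}$ is $\tfrac{2}{t+2}\big(\pi_{-i}^{t-1} - \bar{\pi}_{-i}^{t-1}\big)$, which has norm $O(1/t)$. A short chain of inequalities — comparing $v_i(\pi_i^t, \bar{\pi}_{-i}^{t-1})$ to $v_i(\pi_i^t, \sigma^t_{-i})$, then to $v_i(\mathrm{BR}(\bar{\pi}_{-i}^{t-1}), \sigma^t_{-i})$, then to $v_i(\mathrm{BR}(\bar{\pi}_{-i}^{t-1}), \bar{\pi}_{-i}^{t-1})$, absorbing two perturbation terms of size $O(1/t)$ — shows $\pi_i^t$ is an $\epsilon_t$-best response to $\bar{\pi}_{-i}^{t-1}$ with $\epsilon_t = O(1/t) \to 0$. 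Together with the step-size facts and $M^t \equiv 0$, this exhibits FLOP as a GWFP process, and invoking the convergence result for GWFP \cite{leslie2006generalised} completes the proof.

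The main obstacle, such as it is, is purely bookkeeping: getting the Lipschitz argument for the optimistic perturbation clean, i.e. confirming the constant in $\epsilon_t = O(1/t)$ is genuinely game-dependent-only and that the perturbation really is the difference of two simplex points scaled by $\tfrac{2}{t+2}$. There is no deep difficulty — the optimism term is designed precisely so that it is a bounded $O(1/t)$ nudge away from the fictitious-play target, exactly as Linear CFR's optimistic current-iterate weighting is a vanishing perturbation of the averaging scheme. One could alternatively skip the explicit Lipschitz constant and simply observe that $\|\sigma^t_{-i} - \bar{\pi}_{-i}^{t-1}\|_1 \to 0$ and that $\epsilon$-best-response correspondences are upper hemicontinuous, which suffices for the GWFP hypothesis without quantifying the rate.
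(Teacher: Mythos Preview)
Your proposal is correct and follows essentially the same route as the paper: read off the step size $\alpha_t = \tfrac{2}{t+1}$ from the averaging update, then argue that the optimistic best response $\pi_i^t$ is an $\epsilon_t$-best response to $\bar{\pi}_{-i}^{t-1}$ with $\epsilon_t = O(1/t)$ because the optimistic target differs from $\bar{\pi}_{-i}^{t-1}$ by an $O(1/t)$ convex shift and payoffs are bounded. The paper's proof is simply terser --- it asserts $\epsilon_t < M\tfrac{2}{t+2}$ directly from the payoff range $M$ without writing out the Lipschitz chain --- but the argument is the same.
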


\begin{proof}
Assume that the range of payoffs in the game is $M$. Since $\pi_i^t = \argmax_{\pi_i}v_i(\pi_i, \frac{t}{t+2}\bar{\pi}_{-i}^{t-1} + \frac{2}{t+2}\pi_{-i}^{t-1})$, so $\pi_i^t$ is an $\epsilon_t$-best response to $\bar{\pi}^{t-1}_{-i}$ where $\epsilon_t < M\frac{2}{t+2}$ and $\epsilon_t \to 0$ as $t \to \infty$. Thus, FLOP is a form of GWFP with $\alpha_t = \frac{2}{t}$.
\end{proof}

\begin{figure}[!h]
	\vspace{-0.1in}
	\centering
	\includegraphics[width=0.95\textwidth]{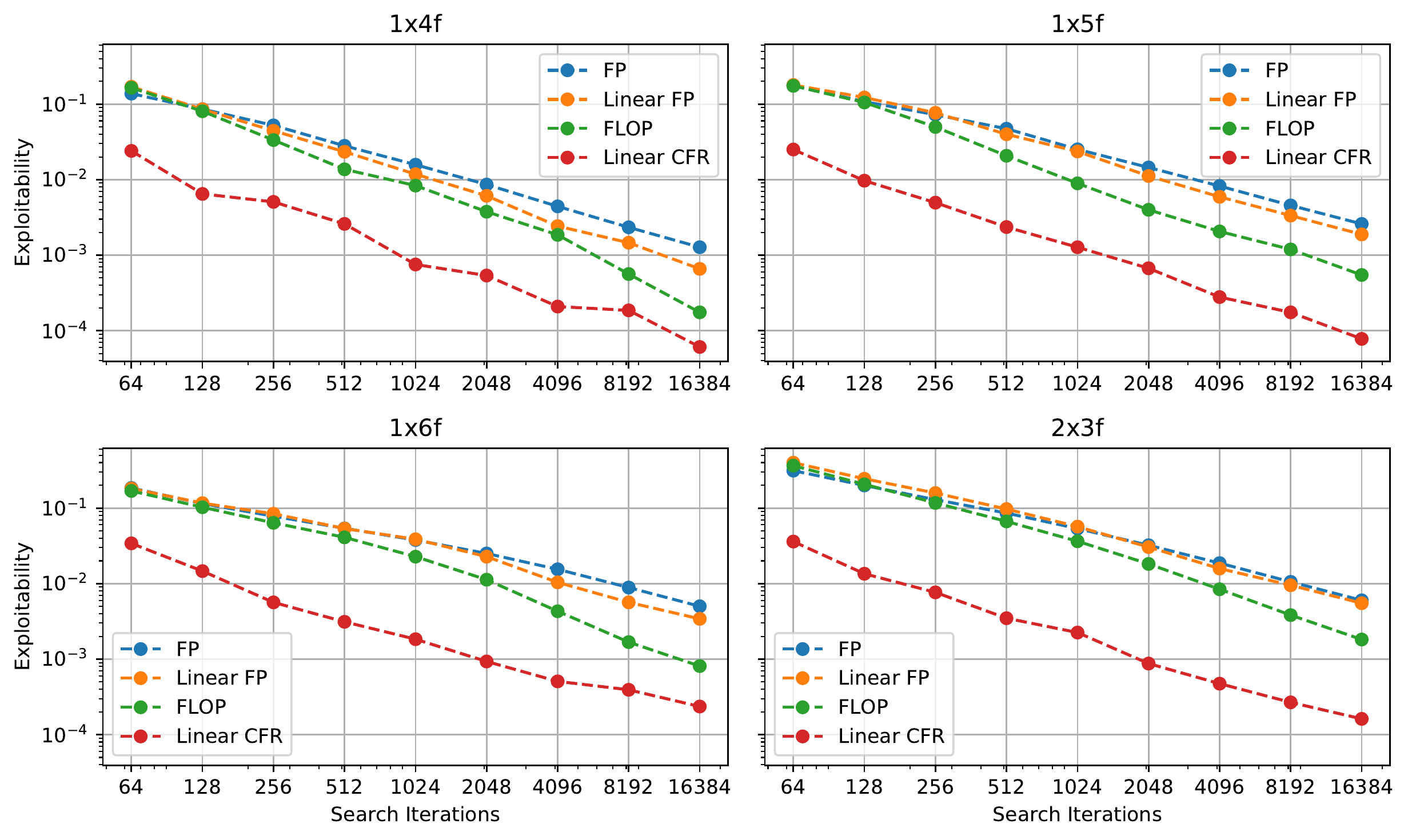}
	\vspace{-0.1in}

	\caption{\small{Exploitability of different algorithms of 4 variants of Liar's Dice: 1 die with 4, 5, or 6 faces and 2 dice with 3 faces. For all games FLOP outperforms Linear FP, but does not match the convergence of Linear CFR.
	}}
	\label{fig:flop_liars}
\end{figure}

\begin{figure}[!h]
	\vspace{-0.1in}
	\centering
	\includegraphics[width=0.95\textwidth]{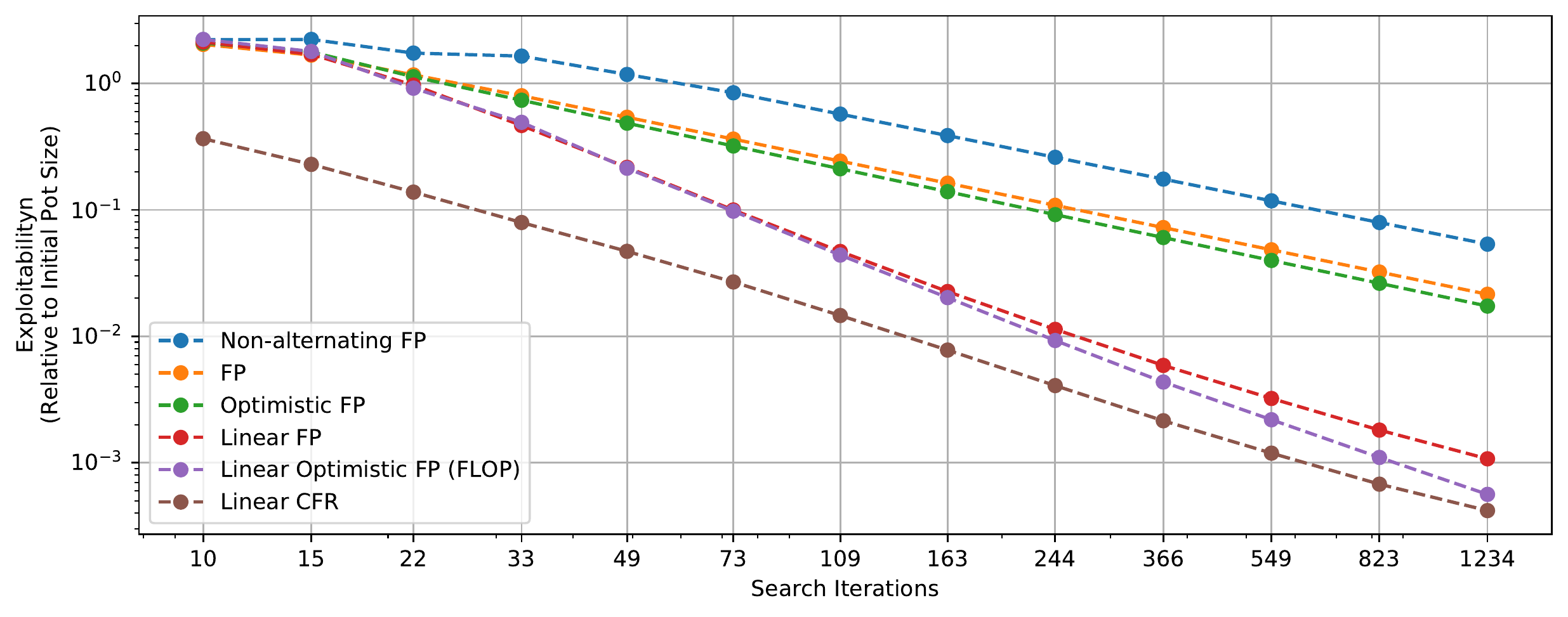}
	\vspace{-0.1in}

	\caption{\small{Exploitability of different algorithms for Turn Endgame Hold’em.}}
	\label{fig:flop_holdem}
\end{figure}

\section{CFR-AVG: CFR Decomposition using Average Strategy}
\label{sec:cfrave}
On each iteration~$t$ of CFR-D, the value of every leaf node $z$ is set to $\hat{v}(s_i(z)|\beta^{\pi^t}_{z})$. Other than changing the values of leaf nodes every iteration, CFR-D is otherwise identical to CFR. If $T$ iterations of CFR-D are conducted with a value network that has error at most $\delta$ for each infostate value, then $\bar{\pi}^T$ has exploitability of at most $k_1 \delta + k_2 / \sqrt{T}$ where $k_1$ and $k_2$ are game-specific constants~\cite{moravvcik2017deepstack}.

Since it is the \emph{average} policy profile $\bar{\pi}^t$, not $\pi^t$, that converges to a Nash equilibrium as $t \to \infty$, and since the leaf PBSs are set based on $\pi^t$, the input to the value network~$\hat{v}$ may span the entire domain of inputs even as $t \rightarrow \infty$.
For example, suppose in a Nash equilibrium~$\pi^*$ the probability distribution at $\beta^{\pi^*}_{z}$ was uniform. Then the probability distribution at $\beta^{\pi^t}_{z}$ for any individual iteration~$t$ could be \emph{anything}, because regardless of what the probability distribution is, the average over all iterations could still be uniform in the end. Thus, $\hat{v}$ may need to be accurate over the entire domain of inputs rather than just the subspace near $\beta^{\pi^*}_{z}$.

In \textbf{CFR-AVG}, leaf values are instead set according to the \emph{average policy}~$\bar{\pi}^t$ on iteration~$t$. When a leaf PBS is sampled, the leaf node is sampled with probability determined by $\pi^t$, but the PBS itself is defined using $\bar{\pi}^t$.

We first describe the tabular form of CFR-D~\cite{burch2014solving}. Consider a game $G'$ and a depth-limited subgame $G$, where both $G'$ and $G$ share a root but $G$ extends only a limited number of actions into the future. Suppose that $T$ iterations of a modified form of CFR are conducted in $G'$. On each iteration $t \le T$, the policy $\pi(s_i)$ is set according to CFR for each $s_i \in G$. However, for every infostate $s'_i \in G' \setminus G$, the policy is set differently than what CFR would call for. At each leaf public state $s'_\text{pub}$ of $G$, we solve a subgame rooted at $\beta_{s'_{\text{pub}}}^{\pi^{t}}$ by running $T'$ iterations of CFR. For each $s'_i$ in the subgame rooted at $\beta_{s'_{\text{pub}}}^{\pi^{t}}$, we set $\pi^{t}(s'_i) = \bar{\pi}^T(s'_i)$ (where $\pi^{t}(s'_i)$ is the policy for the infostate in $G$ and $\bar{\pi}^T(s'_i)$ is the policy for the infostate in the subgame rooted at $\beta_{s'_{\text{pub}}}^{\pi^{t}}$). It is proven that as $T' \to \infty$, CFR-D converges to a $O(\frac{1}{\sqrt{T}})$-Nash equilibrium~\cite{burch2014solving}.

CFR-AVG is identical to CFR-D, except the subgames that are solved on each iteration $t$ are rooted at $\beta_{s'_{\text{pub}}}^{\bar{\pi}^{t}}$ rather than $\beta_{s'_{\text{pub}}}^{\pi^{t}}$. Theorem~\ref{th:cfrave} proves that CFR-AVG achieves the same bound on convergence to a Nash equilibrium as CFR-D.

\begin{theorem}
Suppose that $T$ iterations of CFR-AVG are run in a depth-limited subgame, where on each iteration $t \le T$ the subgame rooted at each leaf PBS $\beta_{s'_\text{pub}}^{\bar{\pi}^{t}}$ is solved completely. Then $\bar{\pi}^{T}$ is a $\frac{C}{\sqrt{T}}$-Nash equilibrium for a game-specific constant $C$.
\label{th:cfrave}
\end{theorem}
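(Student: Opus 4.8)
The plan is to mirror the convergence proof of CFR-D (Burch et al.~\cite{burch2014solving}, Theorem~2, as reused in~\cite{moravvcik2017deepstack}), since CFR-AVG differs from CFR-D only in which PBS each leaf subgame is rooted at. There are two ingredients. First, the trunk iterates $\pi^1,\dots,\pi^T$ are produced by vanilla CFR inside the depth-limited subgame, so they have average external regret $O(1/\sqrt{T})$ at every trunk infostate and hence whole-trunk average regret $O(1/\sqrt{T})$ by the usual regret decomposition. Second, because each leaf subgame is solved \emph{completely}, the infostate values fed back up to the trunk on iteration $t$ are the \emph{exact} values of a Nash equilibrium of the subgame rooted at the leaf PBS used on that iteration, so the leaf subgames contribute no regret of their own. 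Composing these through the CFR-D regret-decomposition/resolving-safety argument yields an exploitability bound of $C/\sqrt{T}$ for the combined average strategy $\bar{\pi}^T$, with no additive error term since the subgame solves are exact.

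Concretely I would proceed as follows. First, record that leaf values are bounded: by Lemma~\ref{lemma:v_concave} the PBS value $V_1$ lies in the payoff range of the game, and by Theorem~\ref{th:cfv} each infostate leaf value $v_1^{\pi^*}(s_1\mid\beta)$ is a component of a supergradient of $V_1$ (hence also bounded), so the standard CFR regret bound applies verbatim in the trunk. Second, verify the Burch et al.\ regret decomposition here: player~1's whole-game regret, measured against the sequence of leaf values actually used, splits into the trunk regret ($O(\sqrt{T})$) plus a sum of per-subgame terms, each of which is non-positive because on every iteration the subgame is played according to an exact Nash equilibrium of that subgame, so no opponent deviation confined to a subgame can gain. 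Third, invoke the resolving-safety half of the CFR-D argument: the subgame solved on iteration $t$ is rooted at a \emph{genuinely reachable} PBS, namely $\beta_z^{\bar{\pi}^t}$, so passing down its equilibrium beliefs never pushes the opponent outside the set of beliefs consistent with some strategy, and the opt-out gadget caps the exploitability of the aggregate subgame play by a game-specific constant times $1/\sqrt{T}$. Fourth, symmetrize over the two players and collect constants to get the claimed $C/\sqrt{T}$ bound on the exploitability of $\bar{\pi}^T$.

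The one genuinely new point — and where I expect the real work to be — is reconciling the fact that CFR-AVG roots each leaf subgame on iteration $t$ at the \emph{average}-strategy PBS $\beta_z^{\bar{\pi}^t}$, while the trunk-CFR bookkeeping weights that leaf by the \emph{current}-strategy reach induced by $\pi^t$; in CFR-D these two coincide, so the leaf values are exactly the continuation values of the iteration-$t$ profile, whereas in CFR-AVG they are not. To close this gap I would use that the CFR average strategy is defined in a reach-weighted way, so player~$i$'s reach to each infostate under $\bar{\pi}_i^t$ is the normalized sum of its reaches under $\pi_i^1,\dots,\pi_i^t$; consequently the opponent-belief component $\beta_{z,2}^{\bar{\pi}^t}$ is a reach-weighted convex combination of $\beta_{z,2}^{\pi^1},\dots,\beta_{z,2}^{\pi^t}$. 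Concavity of $V_1$ in the opponent belief (Lemma~\ref{lemma:v_concave}) together with the supergradient identity of Theorem~\ref{th:cfv} then lets me sandwich the cumulative leaf values CFR-AVG feeds into the trunk between the cumulative leaf values CFR-D would have fed in and a correction term that is measured against reach-weighted quantities already present in the regret sum, hence absorbed into the $O(1/\sqrt{T})$ bound. Everything else is a transcription of the CFR-D proof with $\beta_z^{\pi^t}$ replaced throughout by $\beta_z^{\bar{\pi}^t}$.
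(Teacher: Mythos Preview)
Your overall plan --- bound trunk regret by the standard CFR argument, then show the per-subgame regret contributions are non-positive --- is exactly what the paper does. Where you diverge is in how you propose to close the ``current-vs-average belief'' gap you correctly flag in your fourth paragraph.

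The paper does not use any concavity/supergradient sandwiching, nor any opt-out gadget or resolving-safety construction. Instead it gives a short direct induction on the subgame counterfactual regrets: writing $a_i^{*,t} = \argmax_a \sum_{t'\le t} v^{t'}(s_i,a)$, one peels off the last term to get
\[
R^{t+1}(s_i) \le R^t(s_i) + \bigl(v^{t+1}(s_i,a_i^{*,t+1}) - v^{t+1}(s_i)\bigr),
\]
and argues the increment is $\le 0$ because on iteration $t{+}1$ the subgame policy $\pi^{t+1}$ is an exact Nash equilibrium of the subgame rooted at $\beta^{\bar\pi^{t+1}}$, hence $\pi_i^{t+1}$ is already a best response there and no single-action deviation at $s_i$ can help. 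With the base case $R^1(s_i)=0$ (since $\bar\pi^1=\pi^1$), this yields $R^T(s_i)\le 0$ for every subgame infostate, and the usual CFR composition finishes the proof.

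Your proposed route has a concrete error: Lemma~\ref{lemma:v_concave} establishes concavity of $V_1$ in player~1's \emph{own} belief $\beta_1$, not in the opponent's belief; $V_1$ is in fact convex in $\beta_2$. So the sentence ``concavity of $V_1$ in the opponent belief (Lemma~\ref{lemma:v_concave})'' does not give the inequality direction you need, and the subsequent ``sandwich'' is never pinned down to an actual two-sided bound. The opt-out gadget and resolving-safety machinery are likewise irrelevant here: the paper's argument works at the level of immediate counterfactual regrets in the full game $G'$, with no auxiliary gadget game. You should drop the concavity detour and the gadget paragraph and replace them with the one-line inductive step above.
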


CFR-AVG has a number of potential benefits over CFR-D:
\begin{itemize}
    \item Since $\bar{\pi}^t$ converges to a Nash equilibrium as $t \rightarrow \infty$, CFR-AVG allows $\hat{v}$ to focus on being accurate over a more narrow subspace of inputs.
    
    \item When combined with a policy network (as introduced in Section~\ref{sec:method_policy}), CFR-AVG may allow $\hat{v}$ to focus on an even more narrow subspace of inputs.
    
    \item Since $\bar{\pi}^{t+1}$ is much closer to $\bar{\pi}^t$ than $\pi^{t+1}$ is to $\pi^t$, in practice as $t$ becomes large one can avoid querying the value network on every iteration and instead recycle the values from a previous iteration. This may be particularly valuable for Monte Carlo versions of CFR.
\end{itemize}

While CFR-AVG is theoretically sound, we modify its implementation in our experiments to make it more efficient in a way that has not been proven to be theoretically sound. The reason for this is that while the \emph{input} to the value network is $\beta_{s'_\text{pub}}^{\bar{\pi}^t}$ (i.e., the leaf PBS corresponding to $\bar{\pi}^t$ being played in $G$, the \emph{output} needs to be the value of each infostate $s_i$ given that $\pi^t$ is played in $G$. Thus, unlike CFR-D and FP, in CFR-AVG there is a mismatch between the input policy and the output policy.

One way to cope with this is to have the input consist of both $\beta_{s'_\text{pub}}^{\bar{\pi}^t}$ and $\beta_{s'_\text{pub}}^{\pi^t}$. However, we found this performed relatively poorly in preliminary experiments when trained through self play.

Instead, on iteration $t-1$ we store the output from $\hat{v}(s_i | \beta_{s'_\text{pub}}^{\bar{\pi}^{t-1}})$ for each $s_i$ and on iteration $t$ we set $v^t(s_i)$ to be $t \hat{v}(s_i | \beta_{s'_\text{pub}}^{\bar{\pi}^{t}}) - (t-1) \hat{v}(s_i | \beta_{s'_\text{pub}}^{\bar{\pi}^{t-1}})$ (in vanilla CFR). The motivation for this is that $\pi^t = t \bar{\pi}^t - (t - 1) \bar{\pi}^{t-1}$. If $v^t(h) = v^{t-1}(h)$ for each history $h$ in the leaf PBS, then this modification of CFR-AVG is sound. Since $v^t(h) = v^{t-1}(h)$ when $h$ is a full-game terminal node (i.e., it has no actions), this modified form of CFR-AVG is identical to CFR in a non-depth-limited game. However, that is not the case in a depth-limited subgame, and it remains an open question whether this modified form of CFR-AVG is theoretically sound in depth-limited subgames. Empirically, however, we found that it converges to a Nash equilibrium in turn endgame hold'em for every set of parameters (e.g., bet sizes, stack sizes, and initial beliefs) that we tested.

Figure~\ref{fig:cfrave} shows the performance of CFR-D, CFR-AVG, our modified form of CFR-AVG, and FP in TEH when using an oracle function for the value network. It also shows the performance of CFR-D, our modified form of CFR-AVG, and FP in TEH when using a value network trained through self-play. Surprisingly, the theoretically sound form of CFR-AVG does worse than CFR-D when using an oracle function. However, the modified form of CFR-AVG does better than CFR-D when using an oracle function and also when trained through self play.

\begin{figure}[!h]
	\vspace{-0.1in}
	\centering
	\includegraphics[width=0.95\textwidth]{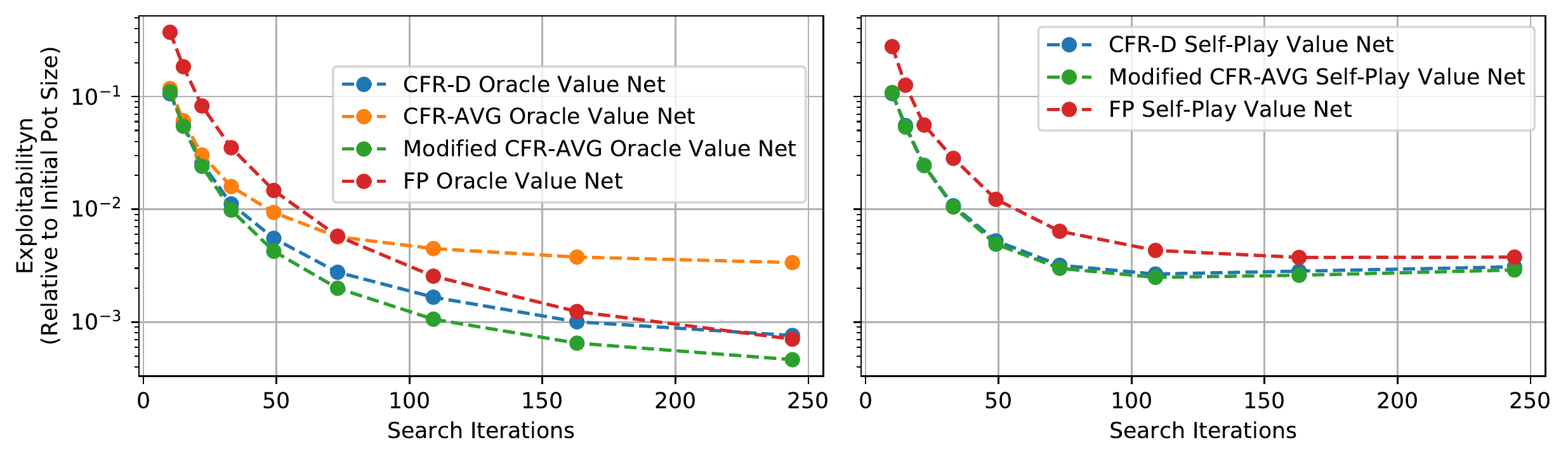}
	\vspace{-0.1in}

	\caption{\small{\textbf{Left:} comparison of CFR-D, CFR-AVG, modified CFR-AVG, and FP using an oracle value network which returns exact values for leaf PBSs. \textbf{Right:} comparison of CFR-D, modified CFR-AVG, and FP using a value network learned through 300 epochs of self play.
	}}
	\label{fig:cfrave}
\end{figure}

We also trained a model on HUNL with training parameters that were identical to the one reported in Section~\ref{sec:results}, but using CFR-D rather than CFR-AVG. That model lost to BabyTartanian8 by $10 \pm 3$ whereas the CFR-AVG model won by $9 \pm 4$. The CFR-D model also beat Slumbot by $39 \pm 6$ whereas the CFR-AVG model won by $45 \pm 5$.

\subsection{Proof of Theorem~\ref{th:cfrave}}

Our proof closely follows from~\cite{burch2014solving} and~\cite{moravvcik2017deepstack}.

\begin{proof}
Let $R^t(s_i)$ be the (cumulative) regret of infostates $s_i$ on iteration $t$. We show that the regrets of all infostates in $G'$ are bounded by $O(\sqrt{T})$ and therefore the regret of the entire game is bounded by $O(\sqrt{T})$.

First, consider the infostates in $G$. Since their policies are chosen according to CFR each iteration, their regrets are bounded by $O(\sqrt{T})$ regardless of the policies played in descendant infostates.

Next consider an infostate $s_i \in G' \setminus G$. We prove inductively that $R^t(s_i) \le 0$. Let $\beta^{\pi^t}$ be the PBS at the root of the subgame containing $s_i$ in CFR-D, and $\beta^{\bar{\pi}^t}$ be the PBS at the root of the subgame containing $s_i$ in CFR-AVG. On the first iteration, $\beta^{\pi^t} = \beta^{\bar{\pi}^t}$. Since we assume CFR-AVG computes an exact equilibrium in the subgame rooted at $\beta^{\bar{\pi}^t} = \beta^{\pi^t}$, so $R^t(s_i) = 0$ on the first iteration.

Next, we prove $R^{t+1}(s_i) \le R^{t}(s_i)$. We define $a^{*,t}$ as
\begin{equation}
a_i^{*,t} = \argmax_{a_i} \sum_{t'=0}^{t} v^{t'}(s_i,a_i)
\end{equation}

By definition of regret,
\begin{equation}
R^{t+1}(s_i) = \sum_{t'=0}^{t+1} \big( v^{t'}(s_i,a_i^{*,t+1}) - v^{t'}(s_i)\big)
\end{equation}
Separating iteration $t+1$ from the summation we get
\begin{equation}
R^{t+1}(s_i) = \sum_{t'=0}^t \big( v^{t'}(s_i,a_i^{*,t+1}) - v^{t'}(s_i) \big) + \big(v^{t+1}(s_i,a_i^{*,t+1}) - v^{t+1}(s_i)\big)
\end{equation}

By definition of $a_i^{*,t}$ we know $\sum_{t'=0}^t v^{t'}(s_i,a_i^{*,t+1}) \le \sum_{t'=0}^t v^{t'}(s_i,a_i^{*,t})$, so
\begin{equation}
R^{t+1}(s_i) \le \sum_{t'=0}^t \big( v^{t'}(s_i,a_i^{*,t}) - v^{t'}(s_i) \big) + \big(v^{t+1}(s_i,a_i^{*,t+1}) - v^{t+1}(s_i)\big)
\end{equation}
Since $\sum_{t'=0}^t \big( v^{t'}(s_i,a_i^{*,t}) - v^{t'}(s_i) \big)$ is the definition of $R^{t}(s_i)$ we get
\begin{equation}
R^{t+1}(s_i) \le R^{t}(s_i) + \big(v^{t+1}(s_i,a_i^{*,t+1}) - v^{t+1}(s_i)\big)
\end{equation}
Since $\pi^{t+1} = \pi^{*,t+1}$ in the subgame where $\pi^{*,t+1}$ is an exact equilibrium of the subgame rooted at $\beta^{\bar{\pi}^{t+1}}$, so $\pi^{t+1}$ is a best response to $\bar{\pi}^{t+1}$ in the subgame and therefore $v^{t+1}(s_i,a_i^{*,t+1}) = v^{t+1}(s_i)$. Thus,
\begin{equation}
    R^{t+1}(s_i) \le R^t(s_i)
\end{equation}
\end{proof}

\section{CFR Warm Start Algorithm Used}
\label{sec:warm}

Our warm start technique for CFR is based on~\cite{brown2016strategy}, which requires only a policy profile to warm start CFR soundly. That techniques computes a ``soft'' best response to the policy profile, which results in instantaneous regrets for each infostate. Those instantaneous regrets are scaled up to be equivalent to some number of CFR iterations. However, that technique requires careful parameter tuning to achieve good performance in practice.

We instead use a simplified warm start technique in which an exact best response to the policy profile is computed. That best response results in instantaneous regrets at each infostate. Those regrets are scaled up by a factor of 15 to imitate 15 CFR iterations. Similarly, the average policy effectively assumes that the warm start policy was played for the first 15 iterations of CFR. CFR then proceeds as if 15 iterations have already occurred.

%\bibliographystyle{plain}
%\bibliography{references}

\end{document}